\setlist[enumerate]{leftmargin=.5in}
\setlist[itemize]{leftmargin=.5in}
\newcommand\nonumberfootnote[1]{%
  \begingroup
  \renewcommand\thefootnote{}\footnote{#1}%
  \addtocounter{footnote}{-1}%
  \endgroup
}
\crefname{hypothesis}{Hypothesis}{Hypotheses}
\title{
  Polynomial Preconditioners for Regularized Linear Inverse Problems
  \thanks{Submitted to the editors \today.
  \funding{This work was supported in part by the National Institute of Health
           under Grants R01EB020613, R01MH116173, R01EB019437, U01EB025162, and
           P41EB030006, and in part by GE Healthcare.}}
}
\author{
    Siddharth S. Iyer%
    \thanks{Department of Electrical Engineering and Computer Science,
    Massachusetts Institute of Technology, Cambridge, MA
    (\email{ssi.compsens@gmail.com, dluca@mit.edu}).}
  \and
    Frank Ong%
    \thanks{Department of Electrical Engineering,
    Stanford University, Stanford, CA
    (\email{frankongh@gmail.com, kawins@stanford.edu}).}
  \and
    Xiaozhi Cao%
    \thanks{Department of Radiology,
    Stanford University, Stanford, CA
    (\email{xiaozhic@stanford.edu, cyliao@stanford.edu}).}
  \and
    Congyu Liao%
    \footnotemark[4]
  \and
    Luca Daniel%
    \footnotemark[2]
  \and
    Jonathan I. Tamir$^\star$%
    \thanks{Departments of Electrical and Computer Engineering, Diagnostic
    Medicine, and Oden Institute, University of Texas at Austin, Austin,
    TX
    (\email{jtamir@utexas.edu}).}
  \and
    Kawin Setsompop$^\star$%
    \footnotemark[3]
    \footnotemark[4]
}
\begin{document}

\maketitle

\nonumberfootnote{\hspace{-0.275cm}$^\star$J. I. Tamir and K. Setsompop
contributed equally to this article.}
\nonumberfootnote{\hspace{-0.125cm}Parts of this work have been presented at
the ISMRM Annual Conference of 2022 \cite{iyer2022}.}

\begin{abstract}
This work aims to accelerate the convergence of proximal gradient methods used
to solve regularized linear inverse problems.
This is achieved by designing a polynomial-based preconditioner that targets
the eigenvalue spectrum of the normal operator derived from the linear
operator.
The preconditioner does not assume any explicit structure on the linear
function and thus can be deployed in diverse applications of interest.
The efficacy of the preconditioner is validated on three different Magnetic
Resonance Imaging applications, where it is seen to achieve faster iterative
convergence while achieving similar reconstruction quality.
\end{abstract}

\begin{keywords}
regularized linear inverse problems,
polynomial preconditioner,
proximal gradient descent
\end{keywords}

\begin{MSCcodes}
90C06, 
90C90, 
90C25, 
92C55  
\end{MSCcodes}

\section{Introduction}

Linear inverse problems, particularly those derived from compressed sensing
formulations, are typically posed as the following optimization problem
\cite{foucart2013, chen2001}:
\begin{equation}
x^\star = \left\{
\begin{aligned}
& \underset{x}{\text{argmin}}
& & g(x) \\
& \text{subject to}
& & Ax = b
\end{aligned}\right.
\label{eq:cs}
\end{equation}
Here, $A: \mathbb{C}^n \rightarrow \mathbb{C}^m$ is the linear forward model or
the measurement matrix, $b \in \mathbb{C}^m$ is the acquired data or
measurement vector and $g: \mathbb{C}^n \rightarrow \mathbb{R}$ is a prior
regularization function, often the $l_1-$norm, to enforce sparsity.
In practice, due to measurement errors (such as noise), the following
optimization problem is solved instead
\cite{foucart2013, lustig2007, block2007, liu2009, fessler2020}:
\begin{equation}
x^\star = \left\{
\begin{aligned}
& \underset{x}{\text{argmin}}
& & g(x) \\
& \text{subject to}
& & \lVert Ax - b \rVert_2 \leq \epsilon
\end{aligned}\right.
\label{eq:copt}
\end{equation}
The optimization \cref{eq:copt} is often re-cast as the following
unconstrained problem for ease-of-computation:
\begin{equation}
x^\star = \left\{
\begin{aligned}
& \underset{x}{\text{argmin}}
& & \frac{1}{2} \lVert Ax - b \rVert_2^2 + \lambda g(x)
\end{aligned}\right.
\label{eq:ucopt}
\end{equation}
The Karush-Kuhn-Tucker (KKT) conditions can be used to verify that for any
$\epsilon$, there is an appropriate choice of $\lambda$ such that the solutions
of \cref{eq:copt} and \cref{eq:ucopt} coincide.

The unconstrained formulation \cref{eq:ucopt} is commonly utilized for solving
ill-conditioned least squares problems, where the regularization $(g)$ helps
stabilize the solution.
When $g$ is the $l_1-$norm, the unconstrained formulation \cref{eq:ucopt} is
commonly called referred to as ``LASSO''\cite{tibshirani1996}.

\subsection{Motivation}
In computational Magnetic Resonance Imaging (MRI), iterative proximal methods
\cite{parikh2014} have emerged as the workhorse algorithms to solve linear
inverse problems that are posed in the form of \cref{eq:ucopt}
\cite{lustig2007, lustig2008, block2007, liu2009, fessler2020}.
This is arguably because:
\begin{enumerate}
\item These algorithms are highly generalizable as they do not impose any
      restrictions on the utilized regularization function $(g)$ (as long as
      the proximal operator of $(g)$, defined in the sequel, can be
      calculated).
\item These algorithms leverage matrix-free implementations of the
      forward-model $A$ (i.e., the individual coordinates or entries of $A$ are
      not known or stored in memory) for computationally efficient processing
      of high-dimensional problems.
\end{enumerate}

Due to the above mentioned advantages, MRI reconstructions are typically posed
as \cref{eq:ucopt} and are solved using the Fast Iterative
Shrinkage-Thresholding Algorithm (FISTA)\cite{beck2009}, which enjoys
theoretically optimal convergence.
However, the iterative convergence of FISTA and other iterative proximal
methods is largely limited by the conditioning of $A$ as determined by the
eigenvalues of $A^*A$, where $A^*$ is the adjoint of $A$.
This limits the integration of MRI applications with an ill-posed $A$ that
leverage \cref{eq:ucopt} into clinical practice, as the ill-conditioning
results in long reconstruction times.

\subsection{Contributions}

This work proposes a generalizable polynomial-based preconditioner for faster
iterative convergence of regularized linear inverse problems that leverage
proximal gradient methods (like FISTA).
The evolution of the iterates in proximal gradient descent (PGD)
\cite{parikh2014} is analyzed and a cost function for polynomial optimization
is derived such that the optimized polynomial $(p)$, when utilized as a
spectral function, directly improves the convergence rate of PGD (and
subsequently, FISTA).
The proposed preconditioner $P$ is evaluated as $P = p(A^*A)$.
Once the polynomial is calculated, it can be applied to any application with
forward model $A$ as long as the maximum eigenvalue of $A^*A$ can be estimated.
(Note that standard FISTA also requires an estimate of the maximum eigenvalue
of $A^*A$.)
The proposed preconditioner does not assume any structure on $A$, and
can leverage matrix-free implementations of $A^*A$.
Similarly, the proposed preconditioner does not assume any additional structure
on $g$ other than that its proximal operator, defined in the sequel, can be
evaluated.
Thus, the proposed preconditioner is highly generalizable and can be applied
to various regularized linear inverse problems of interest.
In particular, the proposed method retains the ``plug-and-play'' property of
FISTA, and is thus an ideal candidate for integration into reconstruction
toolboxes like SigPy \cite{ong2019}, MIRT \cite{mirt} and BART
\cite{uecker2015} whose users typically leverage proximal algorithms in an
``out-of-the-box'' manner.

\subsection{Related Works}

A polynomial-based preconditioner for accelerating FISTA, which will be
denoted ``IFISTA'' in this work, was presented in \cite{bhotto2015}.
IFISTA uses a polynomial with binomial coefficients to construct the
preconditioner by evaluating the polynomial on $A^*A$.
In contrast, the polynomial coefficients used in this work are obtained from
optimizing a cost function derived from analyzing the error propagation over
iterations of PGD which, for a fixed polynomial degree, results in polynomial
coefficients that more explicitly target the improved convergence compared to
the binomial coefficients used in IFISTA.
In fact, the cost function utilized in this work has a natural connection to
\cite{johnson1983}, where a similar cost function was proposed as a means of
accelerating the convergence of the Conjugate Gradient.
While \cite{johnson1983} focuses explicitly on ordinary least squares
without regularization and motivates the polynomial design as a means of
approximating the inverse of $A^*A$, this work arrives at the same cost
function for polynomial design, but from the completely different perspective
of using PGD to solve ill-posed problems that leverage regularization.

Jacobi-like ``left-preconditioning'' methods have also been proposed as a
means of accelerating convergence, where a matrix $D$ is designed so that
$A^*DA$ is better conditioned.
$D$ is then incorporated into the least-squares cost in \cref{eq:ucopt}
to yield the following \cite{ong2020}:
\begin{equation}
x^\star = \left\{
\begin{aligned}
& \underset{x}{\text{argmin}}
& & \frac{1}{2} \lVert D^{1/2}(Ax - b) \rVert_2^2 + \lambda g(x)
\end{aligned}\right.
\label{eq:ducopt}
\end{equation}
For non-Cartesian MRI, these methods are typically called ``Density
Compensation'' and have seen wide adoption \cite{pipe1999, pruessmann2001}.
Note that by defining $A_{\,\text{DCF}}$ and $b_{\,\text{DCF}}$ as $D^{1/2}A$
and $D^{1/2}b$ respectively, \cref{eq:ducopt} is a special case of
\cref{eq:ucopt}.
Thus, general methods like FISTA, IFISTA and the proposed preconditioner can
synergistically leverage such structure-based left-preconditioning.

In the intersection of computational MRI and regularized linear inverse
problems, several preconditioning methods have been proposed that leverage
the circulant structure of $A$, such as in \cite{koolstra2019, ramani2011,
weller2014, muckley2016}.
In \cite{ong2020}, a Frobenius-norm-optimized diagonal preconditioner for the
dual variables of the primal-dual hybrid gradient (PDHG) algorithm
\cite{chambolle2011} was presented to improve convergence for non-Cartesian MRI
applications.
However, by utilizing explicit structure, it is unclear on whether the
mentioned methods will generalize well to an arbitrary $A$.
This increases the barriers to entry of such preconditioning methods,
particularly for novel applications.
This is particularly true for recent contrast-resolved MRI
applications\cite{cao2022, tamir2017, wang2019, ma2013, liang2007, huang2013,
zhao2015, velikina2015, ben2015, kecskemeti2016, velikina2013, cao2018} with
large measurement sizes and an ill-conditioned forward operator $A$ such as
Echo Planar Time Resolved Imaging \cite{wang2019} and Magnetic Resonance
Fingerprinting (MRF) \cite{ma2013, cao2018, cao2022}.
The high dimensionality and ill-conditioning of $A$ may also result in
computationally intensive procedures to estimate the preconditioner, such as
with the Frobenius norm formulation used in \cite{ong2020} and circulant
preconditioner design process proposed in \cite{muckley2016}.

Another advantage of FISTA, IFISTA and the proposed preconditioner is that
they can utilize efficient implementations of the normal operator $(A^*A)$,
which cannot be leveraged by PDHG and \cite{ong2020}.
For example, non-Cartesian MRI reconstructions can leverage the Toeplitz
structure of the normal operator of the non-uniform Fourier transform
to avoid expensive gridding operations \cite{wajer2001, fessler2005,
baron2018}.
Temporal subspace methods such as $T_2-$Shuffling can use the
``spatio-temporal'' kernel to avoid expanding into the ``echo'' dimension at
each iteration to significantly reduce the number of Fast Fourier Transforms
needed\cite{tamir2017}.

In applications where the regularization $(g)$ is cognate to the $l_1-$norm to
enforce sparsity, algorithms that accelerate Iteratively Re-Weighted Least
Squares (IRWLS) \cite{daubechies2010, foucart2013} have been proposed such as
\cite{xu2018} and \cite{chen2018}.
However, since these algorithms do not use proximal operators, it limits the
different types of regularizations that can be tested without significantly
modifying the chosen algorithm.
For example, should a user choose to use the algorithm proposed in
\cite{chen2018}, it is on the user to verify whether their operator $A$ is
``diagonally dominant'' and that their chosen regularization $(g)$ yields
an easy-to-calculate preconditioner (which is re-calculated at every outer
iteration).

In the spirit of generalizability and to retain the simplicity and the
``plug-an-play'' benefits of FISTA, this work compares and contrasts the
iterative convergence of the proposed preconditioner against other
generalizable methods like FISTA, IFISTA and the Alternating Direction Method
of Multipliers (ADMM) \cite{parikh2014}.

\section{Theory\label{sec:theory}}
Without loss of generality, let the induced norm of $A$ be unitary, which then
implies the eigenvalues of $A^*A$ and $AA^*$ lie in the interval $[0, 1]$.
Let $\langle \cdot, \cdot \rangle$ denote the standard Euclidean complex inner
product, and let $\lVert \cdot \rVert_p$ denote the standard Euclidean
$l_p-$norm.

The sequel will focus on the unconstrained formulation \cref{eq:ucopt}
as the experiments leverage the smoothness of the least squares
term in the objective function for faster convergence via Nesterov
acceleration \cite{beck2009, nesterov2018}.
\begin{remark}
For simplicity, the following analysis focuses on traditional
PGD.
That being said, the improved conditioning of the method naturally
translates into faster convergence when using FISTA as well.
\end{remark}

\subsection{Background\label{sec:background}}
Following \cite{parikh2014}, the proximal operator of a closed and proper
convex function $g$ is defined as:
\begin{definition}[Proximal Operator]
\begin{equation}
\text{\normalfont \bf prox}_{\alpha g} (v) = \left\{
\begin{aligned}
& \underset{x}{\text{\normalfont argmin}}
& & \frac{1}{2} \lVert x - v \rVert_2^2 + \alpha g(x)
\end{aligned}\right.
\label{eq:proxdef}
\end{equation}
\end{definition}

This proximal operator is used in the Proximal Gradient Descent (PGD)
algorithm to solve optimization algorithms of the form \cref{eq:ucopt}
\cite{parikh2014}.
With the assumption that the induced norm of $A$ is unitary, the PGD
algorithm to solve \cref{eq:ucopt} is listed as \cref{alg:pgd}.
\begin{algorithm}
\caption{Proximal Gradient Descent \cite{parikh2014}}
\label{alg:pgd}
\begin{algorithmic}
\STATE{\textit{Inputs:}\begin{itemize}
  \item[-] forward model $A$
  \item[-] measurements $b$
  \item[-] proximal operator $\text{\bf prox}_{\lambda g}$
  \item[-] $x_0 = 0$
  \end{itemize}}
\STATE{\textit{Step k:} $(k \geq 0)$ Compute
\begin{equation}
x_{k+1} = \text{\bf prox}_{\lambda g} \left(x_k - A^*(Ax_k - b)\right)
\label{eq:pgdupdate}\end{equation}}
\end{algorithmic}
\end{algorithm}

The iterations of \cref{alg:pgd} converge to the optimal solution of
\cref{eq:ucopt}, denoted $x^\star$, which is a stationary point.
\begin{theorem}[Fixed Point Property of PGD \cite{parikh2014}]\label{thm:fp}
A solution $x^\star$ is the optimal solution to \cref{eq:ucopt} if and
only if 
\begin{equation}
x^\star = \text{\bf prox}_{\lambda g} \left(x^\star - A^*(Ax^\star - b)\right)
\label{eq:pgdfp}
\end{equation}
\end{theorem}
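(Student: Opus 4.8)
The plan is to show that each of the two assertions in the biconditional is equivalent to one and the same first-order stationarity condition. Write the objective of \cref{eq:ucopt} as $F(x) = f(x) + \lambda g(x)$ with $f(x) = \frac{1}{2}\lVert Ax - b\rVert_2^2$. Since $f$ is convex and differentiable with gradient $\nabla f(x) = A^*(Ax - b)$, while $g$ is closed, proper and convex, the subdifferential sum rule together with Fermat's rule gives that $x^\star$ solves \cref{eq:ucopt} if and only if $0 \in \nabla f(x^\star) + \lambda\,\partial g(x^\star)$, that is,
\begin{equation}
-A^*(Ax^\star - b) \in \lambda\,\partial g(x^\star).
\label{eq:statcond}
\end{equation}
Here $\mathbb{C}^n$ is identified with $\mathbb{R}^{2n}$ and $\partial g$ denotes the (real) convex subdifferential; under this identification $A^*$ is the transpose and $A^*(Ax-b)$ is the Wirtinger gradient of the least-squares term, so \cref{eq:statcond} is literally the standard real-valued optimality condition.

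Next I would characterize the proximal map via its own optimality condition. By \cref{eq:proxdef}, $u = \text{\bf prox}_{\lambda g}(v)$ is the minimizer of $h(x) = \frac{1}{2}\lVert x - v\rVert_2^2 + \lambda g(x)$, which is unique because $h$ is strongly convex; Fermat's rule applied to $h$ yields
\begin{equation}
u = \text{\bf prox}_{\lambda g}(v) \qquad\Longleftrightarrow\qquad v - u \in \lambda\,\partial g(u).
\label{eq:proxcond}
\end{equation}

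The conclusion is then immediate. Substituting $u = x^\star$ and $v = x^\star - A^*(Ax^\star - b)$ into \cref{eq:proxcond}, the membership on its right-hand side reads $-A^*(Ax^\star - b) \in \lambda\,\partial g(x^\star)$, which is exactly \cref{eq:statcond}. Hence \cref{eq:pgdfp} holds if and only if \cref{eq:statcond} holds, and the latter is equivalent to $x^\star$ being optimal for \cref{eq:ucopt}; this establishes the claimed equivalence.

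The only step needing care is the subdifferential calculus in the complex setting: one should confirm that $g$ closed, proper and convex on $\mathbb{C}^n \cong \mathbb{R}^{2n}$ makes both Fermat's rule ($0 \in \partial F(x^\star)$ iff $x^\star$ minimizes $F$) and the sum rule $\partial F = \nabla f + \lambda\,\partial g$ exact; the sum rule needs no constraint-qualification caveat since $f$ is finite and differentiable everywhere. I note in passing that the unit step size in \cref{eq:pgdupdate} plays no role in this characterization: replacing $A^*(Ax-b)$ by $t\,A^*(Ax-b)$ for any $t>0$ leaves \cref{eq:proxcond} equivalent to \cref{eq:statcond}, and the value $t=1$ is dictated separately by the requirement $t \le 1/\lVert A^*A\rVert = 1$ for convergence of \cref{alg:pgd}.
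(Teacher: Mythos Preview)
The paper does not give its own proof of \cref{thm:fp}; it states the result as background and attributes it to \cite{parikh2014}. Your argument is the standard one found in that reference: reduce both sides of the equivalence to the first-order optimality condition $-\nabla f(x^\star)\in\lambda\,\partial g(x^\star)$ via Fermat's rule and the subdifferential characterization of the proximal map. The main proof is correct and complete.

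One small slip in your closing remark: the claim that replacing $A^*(Ax-b)$ by $t\,A^*(Ax-b)$ for any $t>0$ still yields a fixed-point equation equivalent to \cref{eq:statcond} is not right as stated. With $v=x^\star - t\,A^*(Ax^\star-b)$ and $u=x^\star$, \cref{eq:proxcond} gives $-t\,A^*(Ax^\star-b)\in\lambda\,\partial g(x^\star)$, which characterizes the minimizer of $f+(\lambda/t)g$, not of $f+\lambda g$. The step-size-$t$ PGD update that preserves the same fixed point is $x_{k+1}=\text{\bf prox}_{t\lambda g}\bigl(x_k - t\,\nabla f(x_k)\bigr)$; one must scale the prox parameter along with the gradient step. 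This does not affect your proof of the theorem itself, since the paper fixes $t=1$.
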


Assuming $\text{\bf prox}_{\lambda g}$ has a closed form or easy-to-implement
solution, PGD is a robust and simple-to-implement algorithm that achieves
$O(1/k)$ iterative convergence \cite{beck2009}.
To improve the iterative convergence, FISTA \cite{beck2009} was proposed
that achieves $O(1/k^2)$ iterative convergence by smartly choosing different
point $z_k$ (instead of $x_k$) to evaluate \cref{eq:pgdupdate}.
Listed as \cref{alg:fista} is the FISTA algorithm that utilizes
$\beta_k$ proposed in \cite{chambolle2015} to solve \cref{eq:ucopt}.
\begin{algorithm}
\caption{FISTA\cite{beck2009} with $\beta_k$ from \cite{chambolle2015}}
\label{alg:fista}
\begin{algorithmic}
\STATE{\textit{Inputs:}\begin{itemize}
  \item[-] forward model $A$
  \item[-] measurements $b$
  \item[-] proximal operator $\text{\bf prox}_{\lambda g}$
  \item[-] $x_{-1} = x_0 = 0$
  \end{itemize}}
\STATE{\textit{Step k:} $(k \geq 0)$ Compute
\begin{subequations}\begin{align}
x_{k+1} &= \text{\bf prox}_{\lambda g} \left(z_k - A^*(Az_k - b)\right)\\
\beta_k &= k/(k + 3) \\
z_{k+1} &= x_k + \beta_k \left(x_k - x_{k-1}\right)
\end{align}\end{subequations}}
\end{algorithmic}
\end{algorithm}

A useful property of proximal operators that will be utilized in the sequel is
the \textit{firm non-expansiveness} property.
\begin{lemma}[Firm Non-Expansiveness Property \cite{parikh2014}]
\label{lemma:fne}
For all $x, y \in \mathbb{C}^n$, 
\begin{equation}
\lVert \text{\bf prox}_{\lambda g}(x) - \text{\bf prox}_{\lambda g}(y) \rVert_2
\leq \lVert x - y \rVert_2
\end{equation}
\end{lemma}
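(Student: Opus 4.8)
The plan is to reduce the $1$-Lipschitz bound to the stronger \emph{firm} non-expansiveness inequality $\lVert \text{\bf prox}_{\lambda g}(x) - \text{\bf prox}_{\lambda g}(y) \rVert_2^2 \leq \mathrm{Re}\,\langle x - y,\ \text{\bf prox}_{\lambda g}(x) - \text{\bf prox}_{\lambda g}(y)\rangle$, which then yields the claim by a single application of the Cauchy--Schwarz inequality. Write $u = \text{\bf prox}_{\lambda g}(x)$ and $v = \text{\bf prox}_{\lambda g}(y)$. The first step is to record the first-order optimality conditions for the two minimization problems defining $u$ and $v$ via \cref{eq:proxdef}: since the objective $\tfrac12\lVert \cdot - x\rVert_2^2 + \lambda g(\cdot)$ is minimized at $u$ and $g$ is closed, proper and convex, we get $x - u \in \lambda\,\partial g(u)$, and likewise $y - v \in \lambda\,\partial g(v)$.

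Next I would invoke the subgradient inequality for $g$ at the two points, using the subgradients just identified: $g(v) \geq g(u) + \tfrac{1}{\lambda}\mathrm{Re}\,\langle x-u,\ v-u\rangle$ and $g(u) \geq g(v) + \tfrac{1}{\lambda}\mathrm{Re}\,\langle y-v,\ u-v\rangle$. Adding these two inequalities the $g$ terms cancel, and after multiplying through by $\lambda > 0$ one obtains $0 \geq \mathrm{Re}\,\langle (x-u) - (y-v),\ v-u\rangle$. Rewriting $(x-u)-(y-v) = (x-y) - (u-v)$ and expanding the inner product, this rearranges to exactly $\lVert u-v\rVert_2^2 \leq \mathrm{Re}\,\langle x-y,\ u-v\rangle$, i.e. the firm non-expansiveness inequality.

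Finally I would bound the right-hand side: $\mathrm{Re}\,\langle x-y,\ u-v\rangle \leq \lvert\langle x-y,\ u-v\rangle\rvert \leq \lVert x-y\rVert_2\,\lVert u-v\rVert_2$ by Cauchy--Schwarz. Combining with the previous inequality gives $\lVert u-v\rVert_2^2 \leq \lVert x-y\rVert_2\,\lVert u-v\rVert_2$; if $\lVert u-v\rVert_2 = 0$ the conclusion is immediate, and otherwise dividing by $\lVert u-v\rVert_2$ gives $\lVert u-v\rVert_2 \leq \lVert x-y\rVert_2$, which is the stated bound.

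The argument is essentially routine, and the only point requiring care — the ``hard part'' — is the consistent handling of the complex inner product: one must work with the real part $\mathrm{Re}\,\langle\cdot,\cdot\rangle$ (the real bilinear form underlying $\mathbb{C}^n \cong \mathbb{R}^{2n}$) throughout, both when writing the subgradient condition and the subgradient inequality for the real-valued $g$, and then note that Cauchy--Schwarz still controls $\mathrm{Re}\,\langle\cdot,\cdot\rangle$ by the product of the $l_2$-norms. Well-definedness of $\text{\bf prox}_{\lambda g}$ (uniqueness of $u$ and $v$) relies on strong convexity of the proximal objective, which is implicit in the closed-proper-convex assumption on $g$.
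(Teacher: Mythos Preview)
Your proof is correct and is the standard argument for firm non-expansiveness of the proximal operator. Note, however, that the paper does not actually supply a proof of this lemma: it is simply stated with a citation to Parikh and Boyd~\cite{parikh2014}, so there is no ``paper's own proof'' to compare against. Your derivation---optimality condition $x-u\in\lambda\,\partial g(u)$, monotonicity of the subdifferential (obtained by summing the two subgradient inequalities), then Cauchy--Schwarz---is precisely the textbook route, and your care with $\mathrm{Re}\,\langle\cdot,\cdot\rangle$ in the complex setting is appropriate and necessary.
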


\subsection{Slow Convergence of PGD}
\Cref{lemma:fne} helps illustrate how the convergence of iterates of PGD is
affected by the eigenvalue spectrum of $A$.
\begin{theorem}\label{thm:pgdbnd}
The error of iterates $x_k$ in \cref{alg:pgd} with respect to $x^\star$ are
upper-bounded by the spectrum of $(I - A^*A)$.
\end{theorem}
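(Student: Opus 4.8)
The plan is to combine the fixed-point characterization of $x^\star$ from \Cref{thm:fp} with the firm non-expansiveness of the proximal operator from \Cref{lemma:fne}, reducing a single PGD step to the action of the linear operator $I - A^*A$ on the error, and then iterating. First I would write the error after one step: both $x_{k+1}$ and $x^\star$ are images under $\text{\bf prox}_{\lambda g}$, since \cref{eq:pgdupdate} gives $x_{k+1} = \text{\bf prox}_{\lambda g}(x_k - A^*(Ax_k - b))$ and \Cref{thm:fp} gives $x^\star = \text{\bf prox}_{\lambda g}(x^\star - A^*(Ax^\star - b))$. Applying \Cref{lemma:fne} to these two arguments and simplifying (the $A^*b$ terms cancel) yields
\begin{equation}
\lVert x_{k+1} - x^\star \rVert_2 \leq \lVert (I - A^*A)(x_k - x^\star) \rVert_2 .
\end{equation}
This is the crucial reduction: the nonlinear prox has been peeled away and only the linear map $I - A^*A$ acts on the error.

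Next I would exploit that $A^*A$ is Hermitian and positive semidefinite with spectrum contained in $[0,1]$ (by the unitary-norm normalization of $A$), so that $I - A^*A$ is Hermitian with spectrum in $[0,1]$ as well. Expanding $x_k - x^\star$ in an orthonormal eigenbasis $\{v_i\}$ of $A^*A$ with eigenvalues $\{\mu_i\} \subset [0,1]$ gives $\lVert (I - A^*A)(x_k - x^\star) \rVert_2^2 = \sum_i (1-\mu_i)^2 \lvert\langle v_i, x_k - x^\star\rangle\rvert^2 \leq (1 - \mu_{\min})^2 \lVert x_k - x^\star \rVert_2^2$, i.e. each step contracts the error by the spectral radius $\rho(I - A^*A) = 1 - \mu_{\min}$. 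Chaining this bound over $k$ iterations and using $x_0 = 0$ produces $\lVert x_k - x^\star \rVert_2 \leq (1 - \mu_{\min})^k \lVert x^\star \rVert_2$, so the error decays geometrically at a rate dictated entirely by the spectrum of $I - A^*A$, slow precisely when $A^*A$ has eigenvalues near zero, i.e. when $A$ is ill-conditioned.

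The point that needs care — the main obstacle — is the nonlinearity of $\text{\bf prox}_{\lambda g}$: one cannot diagonalize the iteration globally, because the prox re-mixes the eigencomponents of the error at every step, so the eigenbasis argument above cannot be propagated \emph{across} iterations, only within a single step. \Cref{lemma:fne} is exactly what circumvents this, bounding one step by a purely linear quantity before the prox is re-introduced on the next step; the geometric decay then follows by a routine induction. A secondary subtlety is the normalization $\lVert A \rVert = 1$, which is what guarantees the spectrum of $I - A^*A$ lies in $[0,1]$, so that its operator norm coincides with $1 - \mu_{\min}$ and the contraction factor is genuinely strictly less than one whenever $A$ has full column rank; without this, one would have to track $\lVert I - A^*A\rVert_2$ more carefully.
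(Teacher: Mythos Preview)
Your proposal is correct and takes essentially the same approach as the paper: both combine the fixed-point characterization of $x^\star$ (\Cref{thm:fp}) with firm non-expansiveness (\Cref{lemma:fne}) to obtain the one-step bound $\lVert e_{k+1}\rVert_2 \leq \lVert (I - A^*A)e_k\rVert_2$. The only difference is in the elaboration after that inequality: the paper splits $e_k = s_k + t_k$ with $s_k \in \text{null}(A)$ and $t_k \in \text{null}(A)^\perp$ to isolate the non-decaying null-space component, whereas you bound by the spectral radius $1-\mu_{\min}$ and chain. Your version gives a clean geometric rate but only when $A^*A$ is injective (as you note); the paper's decomposition makes explicit what happens in the general case where $\mu_{\min}=0$, namely that only the $\text{null}(A)^\perp$ component is contracted by the linear part.
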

\begin{proof}
Let $e_k = x_k - x^\star$ be the error over iterations.
Subtracting \cref{eq:pgdupdate} with \cref{eq:pgdfp}, taking the $l_2-$norm
and utilizing \Cref{lemma:fne} yields:
\begin{equation}
\lVert e_{k+1} \rVert_2 \leq \lVert (I - A^*A) e_k \rVert_2
\label{eq:pgdbnd}
\end{equation}
In particular, splitting $e_k$ into $s_k + t_k$ where $s_k \in\text{null}(A)$
and $t_k \in \text{null}(A)^\perp$ results in:
\begin{equation}
\lVert e_{k+1} \rVert_2^2 \leq
\lVert (I - A^*A) t_k \rVert_2^2 + 
\lVert s_k \rVert_2^2
\label{eq:pgdcnv}
\end{equation}
\end{proof}

Therefore, the decrease in error of $e_k$ within $\text{null}(A)^\perp$
(i.e. $t_k$) is upper-bounded by the eigenvalue spectrum of $I - A^*A$.
This bound is strict when $A^*A$ is injective.
\begin{corollary}
If $A^*A$ has no zero valued eigenvalues, the inequality in \cref{eq:pgdbnd} is
strict.
\end{corollary}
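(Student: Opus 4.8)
The plan is to show that the hypothesis ``$A^*A$ has no zero eigenvalues'' upgrades the contraction in \cref{eq:pgdbnd} from a weak contraction (rate $\le 1$) to a strict one, so that $\lVert e_{k+1}\rVert_2 < \lVert e_k\rVert_2$ whenever $e_k \neq 0$; the degenerate case $e_k = 0$ is vacuous since then $x_k = x^\star$ and \cref{thm:fp} forces $e_{k+1} = 0$.

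First I would note that $\text{null}(A^*A) = \text{null}(A)$: if $A^*A x = 0$ then $0 = \langle A^*A x, x\rangle = \lVert Ax \rVert_2^2$, so $Ax = 0$, and the reverse implication is immediate. Hence the hypothesis is equivalent to $\text{null}(A) = \{0\}$, so in the decomposition $e_k = s_k + t_k$ from the proof of \cref{thm:pgdbnd} we have $s_k = 0$ and $e_k = t_k \in \text{null}(A)^\perp = \mathbb{C}^n$. Consequently \cref{eq:pgdcnv} reduces to $\lVert e_{k+1}\rVert_2^2 \le \lVert (I - A^*A) e_k\rVert_2^2$, i.e. the ``floor'' term $\lVert s_k\rVert_2^2$ disappears.

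Next I would control the operator norm of $I - A^*A$. Since the induced norm of $A$ is unitary, the eigenvalues of the Hermitian operator $A^*A$ lie in $[0,1]$; by hypothesis they in fact lie in $(0,1]$, so the eigenvalues of $I - A^*A$ lie in $[0,1)$, and $I - A^*A$ is positive semidefinite with $\lVert I - A^*A \rVert = 1 - \lambda_{\min}(A^*A)$, where $\lambda_{\min}(A^*A) > 0$. Combining this with the previous step, for any $e_k \neq 0$,
\[
\lVert e_{k+1}\rVert_2 \le \lVert (I - A^*A) e_k\rVert_2 \le \bigl(1 - \lambda_{\min}(A^*A)\bigr)\lVert e_k\rVert_2 < \lVert e_k\rVert_2,
\]
which is the asserted strict inequality, and in fact exhibits a geometric decay with rate $1 - \lambda_{\min}(A^*A)$.

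I do not anticipate a genuine obstacle here; the only care needed is interpretive. In the presence of a nontrivial kernel the persistent component $s_k$ allows $\lVert e_{k+1}\rVert_2 = \lVert e_k\rVert_2$ (equivalently $\lVert I - A^*A\rVert = 1$), so the contraction can stall, whereas injectivity of $A^*A$ excludes this and the bound becomes a strict contraction. The subtlety worth flagging is that the rate $1 - \lambda_{\min}(A^*A)$ degrades to $1$ precisely as $\lambda_{\min}(A^*A) \to 0^+$, which is the quantitative form of the slow-convergence phenomenon this subsection is about and the motivation for the preconditioner introduced later.
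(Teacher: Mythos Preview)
Your argument is correct and is precisely the justification the paper leaves implicit (the corollary is stated without proof): injectivity of $A^*A$ forces $s_k=0$ in \cref{eq:pgdcnv} and places the spectrum of $I-A^*A$ in $[0,1)$, so \cref{eq:pgdbnd} yields a strict contraction with rate $1-\lambda_{\min}(A^*A)$. One interpretive caveat worth flagging: the corollary's literal wording could be read as $\lVert e_{k+1}\rVert_2 < \lVert (I-A^*A)e_k\rVert_2$, which is false in general (take $g\equiv 0$, so the proximal operator is the identity and equality holds); your reading---that the bound guarantees $\lVert e_{k+1}\rVert_2 < \lVert e_k\rVert_2$ for $e_k\neq 0$---matches the surrounding discussion and is clearly the intended meaning.
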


Thus, inverse problems involving an $A^*A$ with small eigenvalues
may suffer from slow iterative convergence.

\subsection{Main Results\label{sec:main}}
\Cref{thm:pgdbnd} motivates the design of a preconditioner that minimizes
the magnitude of the eigenvalues of $I - A^*A$ for faster convergence.
The main result of this work is to modify the gradient update steps in
\cref{alg:pgd} and \cref{alg:fista} using preconditioner $P = p(A^*A)$, where
$p$ is designed in such a way that $p(A^*A)$ increases the contributions of the
smaller eigenvalues of $A^*A$ over iterations for faster convergence.

Let $p$ have fixed degree $d$, where $d$ is a hyper-parameter that is to be
tuned for the application of interest.
The coefficients of $p$ are calculated by optimizing over \cref{eq:l2cost},
which is motivated in \Cref{sec:poly}.
\begin{equation}
p = \left\{\underset{q}{\text{argmin}} \int_{z=0}^1 (1 - q(z) z)^2 dz\right.
\label{eq:l2cost}
\end{equation}

The polynomial from \cref{eq:l2cost} is then used to derive preconditioner
$p(A^*A)$ that is included in \cref{alg:pgd} to arrive at \cref{alg:ppgd}.
The derivation is in \Cref{sec:ppgddrv}.
\begin{algorithm}
\caption{Polynomial Preconditioning for PGD}
\label{alg:ppgd}
\begin{algorithmic}
\STATE{\textit{Inputs:}
  \begin{itemize}
  \item[-] forward model $A$
  \item[-] measurements $b$
  \item[-] proximal operator $\text{\bf prox}_{\lambda g}$
  \item[-] optimized polynomial $p$ from \cref{eq:l2cost}
  \item[-] $y_0 = 0$.
  \end{itemize}}
\STATE{\textit{Step k:} $(k \geq 0)$ Compute
\begin{equation}\label{eq:ppgdupdate}
y_{k+1} = \text{\bf prox}_{\lambda g} \left(y_k - p(A^*A) A^*(Ay_k - b)\right)
\end{equation}}
\end{algorithmic}
\end{algorithm}
\begin{remark}
The iteration variable name has been changed in \cref{alg:ppgd} (and
the following \cref{alg:pfista}) from $x_k$ to $y_k$ to emphasize that the
preconditioner is being used.
\end{remark}

Similarly, the preconditioner is integrated into \cref{alg:fista} to arrive
at \cref{alg:pfista}.
\begin{algorithm}
\caption{Polynomial Preconditioning for FISTA}
\label{alg:pfista}
\begin{algorithmic}
\STATE{\textit{Inputs:}
  \begin{itemize}
  \item[-] forward model $A$
  \item[-] measurements $b$
  \item[-] proximal operator $\text{\bf prox}_{\lambda g}$
  \item[-] optimized polynomial $p$ from \cref{eq:l2cost}
  \item[-] $y_{-1} = y_0 = 0$.
  \end{itemize}}
\STATE{\textit{Step k:} $(k \geq 0)$ Compute
\begin{subequations}\begin{align}
y_{k+1} &= \text{\bf prox}_{\lambda g} \left(z_k - p(A^*A)
A^*(Az_k - b)\right)\\
\beta_k &= k/(k + 3) \\
z_{k+1} &= y_k + \beta_k \left(y_k - y_{k-1}\right)
\end{align}\end{subequations}}
\end{algorithmic}
\end{algorithm}

\subsection{Deriving Polynomial Preconditioning for PGD}\label{sec:ppgddrv}
It helps to derive \cref{alg:ppgd} before discussing \cref{eq:l2cost}.
To motivate \cref{alg:ppgd},  first consider the exact formulation \cref{eq:cs}
with $b\in\text{range}(A)$.
Let the singular value decomposition (SVD) of $A$ in dyadic form be as follows:
\begin{equation}
A(\cdot) = \sum_{i=1}^j \sigma_i \langle \cdot, v_i \rangle u_i
\label{eq:svd1}\end{equation}
Here, $j \leq n, 1\geq\sigma_1\geq\sigma_2\geq\dots\geq\sigma_j>0$ are the
singular values of $A$, $\{u_i\}$ and $\{v_i\}$ are the left and right
singular vectors respectively, and
Let $p(z)$ be a polynomial of degree $d$ such that $p(z) > 0$ for $z\in(0,1]$,
and let $P = p(AA^*)$.
\begin{equation}
P(\cdot) = \sum_{i=1}^j p(\sigma_i^2) \langle \cdot, u_i \rangle u_i
\label{eq:psvd}
\end{equation}
As an ansatz, let $P^{\frac{1}{2}}$ be the square root of $P$:
\begin{equation}
P^{\frac{1}{2}}(\cdot) = \sum_{i=1}^j
  \left[p(\sigma_i^2)\right]^{\frac{1}{2}}
  \langle \cdot, u_i \rangle u_i
\end{equation}

\begin{lemma}\label{lemma:pcs}
The condition $Ax = b$ in \cref{eq:cs} is equivalently enforced by the
constraint $P^{\frac{1}{2}} A x = P^{\frac{1}{2}}b$.
\end{lemma}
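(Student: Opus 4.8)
The plan is to reduce everything to a statement about the action of $P^{\frac12}$ on the subspace $\text{range}(A)=\text{span}\{u_1,\dots,u_j\}$, where, by construction, $P^{\frac12}$ is diagonal in the $\{u_i\}$ basis with strictly positive entries and is therefore injective. Since the lemma asserts that the feasible sets $\{x : Ax=b\}$ and $\{x : P^{\frac12}Ax = P^{\frac12}b\}$ coincide, it suffices to prove the biconditional $Ax=b \iff P^{\frac12}Ax=P^{\frac12}b$ for an arbitrary $x$.

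First I would dispatch the forward implication, which is immediate: if $Ax=b$, applying the linear map $P^{\frac12}$ to both sides yields $P^{\frac12}Ax=P^{\frac12}b$. For the converse, suppose $P^{\frac12}Ax=P^{\frac12}b$, i.e. $P^{\frac12}(Ax-b)=0$, and set $w := Ax-b$. Because $Ax\in\text{range}(A)$ for every $x$ and $b\in\text{range}(A)$ by the standing assumption in \cref{eq:cs}, we have $w\in\text{range}(A)=\text{span}\{u_1,\dots,u_j\}$, so $w=\sum_{i=1}^j\langle w,u_i\rangle u_i$. Substituting this expansion into the defining formula for $P^{\frac12}$ gives $P^{\frac12}w=\sum_{i=1}^j [p(\sigma_i^2)]^{\frac12}\langle w,u_i\rangle u_i$.

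Next I would invoke positivity: each singular value satisfies $0<\sigma_i\le 1$, hence $\sigma_i^2\in(0,1]$, and since $p(z)>0$ on $(0,1]$ every coefficient $[p(\sigma_i^2)]^{\frac12}$ is strictly positive and well defined. As $\{u_i\}_{i=1}^j$ is orthonormal, the equation $P^{\frac12}w=0$ forces $[p(\sigma_i^2)]^{\frac12}\langle w,u_i\rangle=0$, and therefore $\langle w,u_i\rangle=0$, for each $i=1,\dots,j$. Consequently $w=0$, i.e. $Ax=b$, completing the equivalence.

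The only subtle point — and the closest thing to an obstacle — is making sure the restriction of $P^{\frac12}$ to $\text{range}(A)$ is genuinely injective; this is exactly what the half-open hypothesis $p(z)>0$ on $(0,1]$ buys us, since the nonzero singular values keep the arguments $\sigma_i^2$ strictly inside $(0,1]$ and away from $0$. Notably, no assumption on $p(0)$ (and hence on how $P^{\frac12}$ behaves on $\text{range}(A)^\perp$) is required, because $w=Ax-b$ has no component there; I would remark on this to explain why \cref{eq:psvd} can safely ignore that part of the spectrum.
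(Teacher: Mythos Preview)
Your proof is correct and follows essentially the same approach as the paper: the paper's one-line argument simply asserts that $p(z)>0$ on $(0,1]$ makes $P$ and $P^{\frac12}$ injective when restricted to $\text{range}(A)$, which is precisely the key fact you unpack in detail via the orthonormal expansion in $\{u_i\}$. Your version is more explicit about the forward implication and about why $w=Ax-b$ lies in $\text{range}(A)$, but the underlying idea is identical.
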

\begin{proof}
The condition $p(z) > 0$ for $z \in (0, 1]$ implies $P$
and $P^{\frac{1}{2}}$ are injective when the domain and co-domain for
both operators are restricted to $\text{range}(A)$.
\end{proof}

\Cref{lemma:pcs} motivates the following ``preconditioned'' formulation
that has the same solution as \cref{eq:cs}.
\begin{equation}
x^\star = \left\{
\begin{aligned}
& \underset{x}{\text{argmin}}
& & g(x) \\
& \text{subject to}
& & P^{\frac{1}{2}}Ax = P^{\frac{1}{2}} b
\end{aligned}\right.
\label{eq:pcs}
\end{equation}

To account for model and measurement errors, \cref{eq:pcs} is relaxed to the
following constrained formulation:
\begin{equation}
x^\star = \left\{
\begin{aligned}
& \underset{x}{\text{argmin}}
& & g(x) \\
& \text{subject to}
& & \lVert P^{\frac{1}{2}}(Ax - b) \rVert_2 \leq \epsilon_p
\end{aligned}\right.
\label{eq:pcopt}
\end{equation}
Note that $\epsilon$ and $\epsilon_p$ in the constraints of
\cref{eq:copt} and \cref{eq:pcopt} respectively are likely to be different
as $P^{\frac{1}{2}}$ is not necessarily unitary.

Similarly to how \cref{eq:copt} is recast to \cref{eq:ucopt},
\cref{eq:pcopt} is solved by the following for an appropriate
choice of $\lambda_p$.
\begin{equation}
x^\star = \left\{
\begin{aligned}
& \underset{x}{\text{argmin}}
& & \frac{1}{2} \lVert P^{\frac{1}{2}}(Ax - b) \rVert_2^2 + \lambda_p g(x)
\end{aligned}\right.
\label{eq:pucopt}
\end{equation}

Defining $A_P = P^{\frac{1}{2}}A$ and $b_P = P^{\frac{1}{2}}b$, the PGD
iterations to solve \cref{eq:pucopt} and the fixed point condition
(\cref{thm:fp}) of \cref{eq:pucopt} is as follows:
\begin{subequations}\begin{align}
y_{k+1} &= \text{prox}_{\lambda_p g}(y_k - A_P^*(A_Py_k - b_P)) \\
y^\star &= \text{prox}_{\lambda_p g}(y^\star - A_P^*(A_Py^\star - b_P))
\end{align}\label{eq:ppgd_step}\end{subequations}
Here, $y^\star$ is the solution to \cref{eq:pucopt}.

To significantly simply the iterations of \cref{eq:ppgd_step}, the
permutability of spectral functions and $A^*A$, listed as
\cref{lemma:permute}, is leveraged.

\begin{lemma}[Permutability of Spectral Functions]\label{lemma:permute}
\begin{subequations}\begin{align}
A^* p(AA^*) A &= p(A^*A) A^*A \\
A^* p(AA^*)   &= p(A^*A) A^*
\end{align}
\end{subequations}
\end{lemma}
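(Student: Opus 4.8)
The plan is to reduce both identities to the single monomial identity
\[
A^*(AA^*)^\ell = (A^*A)^\ell A^* \qquad \text{for every integer } \ell \geq 0,
\]
and then recombine by linearity. Writing the degree-$d$ polynomial as $p(z) = \sum_{\ell=0}^d c_\ell z^\ell$, once the monomial identity is established we immediately obtain
\[
A^* p(AA^*) = \sum_{\ell=0}^d c_\ell\, A^*(AA^*)^\ell = \sum_{\ell=0}^d c_\ell\, (A^*A)^\ell A^* = p(A^*A)\,A^*,
\]
which is the second claimed equation; right-multiplying both sides by $A$ then yields $A^* p(AA^*) A = p(A^*A) A^*A$, the first claimed equation. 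So the entire content is the monomial identity.

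I would prove the monomial identity by induction on $\ell$. The base case $\ell = 0$ is just $A^* = A^*$. For the inductive step, assume $A^*(AA^*)^\ell = (A^*A)^\ell A^*$. Then
\[
A^*(AA^*)^{\ell+1} = \bigl(A^*(AA^*)^\ell\bigr)(AA^*) = (A^*A)^\ell A^*(AA^*) = (A^*A)^\ell (A^*A) A^* = (A^*A)^{\ell+1} A^*,
\]
where the only nontrivial move is the elementary regrouping $A^*(AA^*) = (A^*A)A^*$, which holds by associativity of operator composition. This closes the induction.

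Alternatively, one could substitute the dyadic SVD \cref{eq:svd1} directly: since $AA^*$ and $A^*A$ share the nonzero spectrum $\{\sigma_i^2\}$ with corresponding singular vectors $u_i$ and $v_i$, and $A^* u_i = \sigma_i v_i$, one computes $A^* p(AA^*) A = \sum_i \sigma_i^2 p(\sigma_i^2)\, v_i v_i^* = p(A^*A) A^*A$ after noting that the $p(0)$-weighted projection onto $\mathrm{null}(A^*)$ is annihilated by the trailing factor $A$ (and likewise for the second identity). Both routes are short; I would favor the induction since it avoids having to track the behaviour of $p$ on the zero eigenspace.

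There is no real obstacle here — the statement is a standard ``shift'' identity relating the two normal operators. The only point requiring a moment's care is the constant term of $p$: a nonzero $c_0$ contributes $c_0 A^*$ on each side of the second identity and $c_0 A^*A$ on each side of the first, so it needs no special treatment. I would also remark that the identities hold for any linear $A$, with no appeal to the normalization $\lVert A \rVert = 1$, so they apply verbatim to the spectral functions appearing in \cref{eq:ppgd_step}.
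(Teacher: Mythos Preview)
Your proof is correct. The paper's own argument is exactly your ``alternative'': it simply says that because $p$ is a spectral function, one evaluates $p(AA^*)$ and $p(A^*A)$ via the dyadic SVD in \cref{eq:svd1} and \cref{eq:psvd} and verifies the identities directly. Your preferred route---the one-line induction $A^*(AA^*)^{\ell+1} = (A^*A)^\ell A^*(AA^*) = (A^*A)^{\ell+1}A^*$ followed by linearity---is genuinely different in flavor: it is purely algebraic, needs no SVD, and (as you note) sidesteps any bookkeeping about how $p$ acts on the zero eigenspace or what the constant term $c_0$ contributes. The paper's SVD verification is natural given that the surrounding section already has \cref{eq:svd1} and \cref{eq:psvd} on the page and is thinking spectrally; your induction buys a cleaner self-contained argument that makes the polynomial hypothesis visibly sufficient and shows the result holds for any bounded linear $A$ without normalization. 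Either is entirely adequate for this elementary lemma.
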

\begin{proof}
Since $p$ is a spectral function of $A$, $p(AA^*)$ and $p(A^*A)$ can be
evaluated with respect to the singular value decomposition of $A$ as defined in
\cref{eq:svd1} and \cref{eq:psvd} to verify this result.
\end{proof}

Incorporating \cref{lemma:permute} into \cref{eq:ppgd_step} results in:
\begin{subequations}\begin{align}
y_{k+1} &= \text{prox}_{\lambda_p g}(y_k - p(A^*A)A^*(Ay_k - b)) \\
y^* &= \text{prox}_{\lambda_p g}(y^* - p(A^*A)A^*(Ay^* - b))
\end{align}\label{eq:ppgd}\end{subequations}

This yields the \cref{alg:ppgd}.

\begin{remark}
Solving \cref{eq:pucopt} with FISTA and using \cref{lemma:permute} results
in \cref{alg:pfista}.
\end{remark}

\subsection{Polynomial Design}\label{sec:poly}
A line of reasoning similar to \cref{thm:pgdbnd} is used to arrive at
\cref{eq:l2cost}.

\begin{theorem}\label{thm:ppgdbnd}
The error of iterates $y_k$ in \cref{alg:ppgd} with respect to $y^\star$ are
upper-bounded by the spectrum of $(I - p\,(A^*A)\;A^*A)$.
\end{theorem}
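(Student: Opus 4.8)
The plan is to mirror the argument of \Cref{thm:pgdbnd}, now applied to the preconditioned iteration \cref{eq:ppgd}. First I would set $e_k = y_k - y^\star$ and subtract the fixed-point equation in \cref{eq:ppgd} from the update equation in \cref{eq:ppgd}. The data contribution $p(A^*A)A^*b$ appears identically in the arguments of both proximal operators and therefore cancels, so the two arguments differ by exactly $e_k - p(A^*A)A^*A\,e_k = (I - p(A^*A)\,A^*A)\,e_k$.

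Next I would take the $l_2$-norm of $y_{k+1} - y^\star$ and invoke the firm non-expansiveness of $\text{prox}_{\lambda_p g}$ (\Cref{lemma:fne}), which gives
\begin{equation}
\lVert e_{k+1}\rVert_2 \leq \lVert (I - p(A^*A)\,A^*A)\,e_k \rVert_2 .
\end{equation}
This already proves the claim. To sharpen the bound in the spirit of \cref{eq:pgdcnv}, I would then split $e_k = s_k + t_k$ with $s_k \in \text{null}(A)$ and $t_k \in \text{null}(A)^\perp$, and use the SVD \cref{eq:svd1} together with the functional calculus (as in \Cref{lemma:permute}) to note that $p(A^*A)A^*A$ acts as multiplication by $p(\sigma_i^2)\sigma_i^2$ on each right-singular direction $v_i$ and as the zero operator on $\text{null}(A) = \text{null}(A^*A)$. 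Consequently $I - p(A^*A)A^*A$ restricts to the identity on $\text{null}(A)$ and has eigenvalues $1 - p(\sigma_i^2)\sigma_i^2$ on $\text{null}(A)^\perp$, yielding
\begin{equation}
\lVert e_{k+1}\rVert_2^2 \leq \lVert (I - p(A^*A)A^*A)\,t_k\rVert_2^2 + \lVert s_k\rVert_2^2 .
\end{equation}

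The argument is essentially routine — a near-verbatim adaptation of \Cref{thm:pgdbnd} — so I do not anticipate a serious obstacle. The points needing a little care are: (i) confirming the cross terms in $b$ cancel, which does not require $b \in \text{range}(A)$ since we work with the relaxed formulation \cref{eq:pucopt}; (ii) checking that $p(A^*A)$ and $A^*A$ are simultaneously diagonalized by the right-singular vectors of $A$, so that ``the spectrum of $I - p(A^*A)A^*A$'' is well defined, equal to $\{1 - p(\sigma_i^2)\sigma_i^2\}$ together with $1$ on the nullspace; and (iii) observing, exactly as in the unpreconditioned case, that the error component in $\text{null}(A)$ is not contracted, so only the component in $\text{null}(A)^\perp$ benefits. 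This last observation is what singles out $\max_i \lvert 1 - p(\sigma_i^2)\sigma_i^2\rvert$ over $\sigma_i^2 \in (0,1]$ as the quantity to control, motivating the $L^2$ surrogate \cref{eq:l2cost} used in the polynomial design.
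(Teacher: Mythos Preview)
Your proposal is correct and follows essentially the same approach as the paper: define $e_k = y_k - y^\star$, subtract the fixed-point relation from the update in \cref{eq:ppgd}, apply \Cref{lemma:fne} to obtain $\lVert e_{k+1}\rVert_2 \leq \lVert (I - p(A^*A)A^*A)e_k\rVert_2$, and then split $e_k$ along $\text{null}(A)$ and $\text{null}(A)^\perp$. Your additional remarks on the cancellation of the $b$ terms and the spectral interpretation via the SVD are helpful elaborations but do not depart from the paper's argument.
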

\begin{proof}
Let $e_k = y_k - y^\star$ be the error over iterations.
Subtracting the equations of \cref{eq:ppgd}, taking the $l_2-$norm and
utilizing \Cref{lemma:fne} yields:
\begin{equation}
\lVert e_{k+1} \rVert_2 \leq \lVert (I - p\,(A^*A) A^*A) e_k \rVert_2
\label{eq:ppgdbnd}
\end{equation}
Similarly to \cref{thm:pgdbnd}, splitting $e_k$ into $s_k + t_k$ where
$s_k \in\text{null}(A)$ and $t_k \in \text{null}(A)^\perp$ results in:
\begin{equation}
\lVert e_{k+1} \rVert_2^2 \leq
\lVert (I - p\,(A^*A) A^*A) t_k \rVert_2^2 + 
\lVert s_k \rVert_2^2
\label{eq:ppgdcnv}
\end{equation}
\end{proof}

\Cref{thm:ppgdbnd} motivates finding a polynomial $p$ such that
$I - p\,(A^*A) A^*A$ is as close to zero as possible.
As the dimensions of $A$ are typically very large, it is not
computationally feasible (in terms of processing time) to perform an
eigenvalue decomposition of $A^*A$ to use as prior information for
polynomial design.
To avoid this, the coefficients of the polynomial $p$ is found by
optimizing the continuous approximation of the induced norm.
This is listed as \cref{eq:l2cost}.
The polynomial that minimizes \cref{eq:l2cost} in-turn minimizes
the induced $l_2$-norm of $I - p(A^*A) A^*A$ in \cref{thm:ppgdbnd}.

In other words, the component $t_k$ in $v_i$ (as in \cref{eq:ppgdcnv}
and \cref{eq:svd1}) is upper-bounded by $|1 - p(\sigma_i^2) \sigma_i^2|$, 
which according to \cref{eq:l2cost}, is minimized to be close to zero.
Note that the larger degree $d$ of polynomial $p$, the better
$(1 - p(x) x)^2$ (equivalently $|1 - p(x) x|$), can approximate the zero
function.

Priors on the spectrum can be easily incorporated into \cref{eq:l2cost} as
follows:
\begin{equation}
p = \left\{\underset{q}{\text{argmin}} \int_{z=0}^1 w(z) (1 - q(z) z)^2
dz\right.
\label{eq:wl2cost}
\end{equation}
Here, $w$ can weight the cost to prioritize certain components of the
spectrum.

\begin{remark}
At first glance, minimizing the following objective instead of
\cref{eq:l2cost} is preferable as it directly translates into minimizing
the appropriate induced norm of \cref{thm:ppgdbnd}:
\begin{equation}\label{eq:linfcost}
p = \left\{\underset{q}{\text{argmin}}\;\underset{z \in (0, 1]}{\max} \;
|1 - q(z) z|\right.
\end{equation}
It is well known that Chebyshev polynomials of the first kind can be
used to derive the optimal polynomials such that the maximum absolute value
of that polynomial over a specified interval is minimized.
However, defining $r(z) = 1 - q(z) z$, and using Chebyshev polynomials
to determine $r$ yields a polynomial with $r(z) = 1$ for multiple values
of $z \in [0, 1]$ due to the constraint $r(0) = 1$, which implies
the components of $t_k$ in \cref{eq:ppgdcnv} corresponding to eigenvalues
$\sigma^2$ such that $r(\sigma^2) = 1$ will not decrease.
However, if the minimum non-zero eigenvalue $\mu = \sigma_j^2$ of $A^*A$
is known a-priori, the following polynomial minimizes \cref{eq:linfcost}
over the interval $[\mu, 1]$ (See
\cite{shewchuk1994, kaniel1966, van1986, johnson1983}):
\begin{subequations}\begin{align}
r(z)& = \frac{T_{d+1}\left(\frac{1 + \mu - 2z}{1 - \mu}\right)}
              {T_{d+1}\left(\frac{1 + \mu}{1 - \mu}\right)} \\
p(z)& = \frac{1 - r(z)}{z}
\end{align}\end{subequations}
Here, $T_{d + 1}$ is the Chebyshev polynomial of the first kind of
degree $d + 1$.
In practice, \cref{eq:l2cost} is preferred as it is often computationally
expensive to estimate $\mu$ unless $A^*A$ happens to be injective,
in which case $\mu$ can be estimated by performing power-iteration on
$I - A^*A$.
\end{remark}

\begin{remark}
Rather interestingly, while \cref{eq:l2cost} and \cref{eq:linfcost}
were motivated by studying the evolution of iterates in PGD when solving
regularized linear inverse problems in the form of \cref{eq:ucopt}, the
exact formulation for polynomial optimization was studied in
\cite{johnson1983} as a means of accelerating the convergence of
Conjugate Gradient for ordinary least squares optimization, where
\cref{eq:l2cost} and \cref{eq:linfcost} were optimized to construct an
incomplete inverse of $(A^*A)$ to use a preconditioner.
\end{remark}

\Cref{thm:ppos} verifies that the positivity assumption in \cref{lemma:pcs}
is satisfied by optimized results of \cref{eq:l2cost} and \cref{eq:linfcost}.
\begin{theorem}[Polynomial Positivity]\label{thm:ppos}
The polynomial $p$ derived from optimized \cref{eq:l2cost} and
\cref{eq:linfcost} satisfies the constraint $p(z) > 0$ for $z \in (0, 1]$.
\end{theorem}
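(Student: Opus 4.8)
The plan is to reduce \cref{thm:ppos}, for \emph{both} cost functions \cref{eq:l2cost} and \cref{eq:linfcost}, to a single sup‑norm estimate and then verify that estimate from elementary properties of Chebyshev and Legendre polynomials. Write $r(z):=1-z\,p(z)$, a real polynomial of degree at most $d+1$ with $r(0)=1$; then $p(z)=(1-r(z))/z$, and because $z>0$ for $z\in(0,1]$ the assertion ``$p(z)>0$ on $(0,1]$'' is \emph{equivalent} to ``$r(z)<1$ on $(0,1]$''. Since $r(0)=1$, it therefore suffices to show that each optimal $r$ attains $\sup_{z\in[0,1]}r(z)=1$ only at the point $z=0$. (As a by‑product this re‑establishes the polynomiality of $p$ assumed in \cref{lemma:pcs}, since $r(0)=1$ makes $1-r$ divisible by $z$.)

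I would dispatch the minimizer of \cref{eq:linfcost} first, which is purely computational from the explicit formula $r(z)=T_{d+1}\!\bigl(\tfrac{1+\mu-2z}{1-\mu}\bigr)\big/T_{d+1}\!\bigl(\tfrac{1+\mu}{1-\mu}\bigr)$ (assuming the nontrivial case $\mu\in(0,1)$). The change of variable $t:=\tfrac{1+\mu-2z}{1-\mu}$ maps $[0,1]$ strictly decreasingly onto $[-1,\tfrac{1+\mu}{1-\mu}]$, sending $z=1\mapsto t=-1$, $z=\mu\mapsto t=1$ and $z=0\mapsto t=\tfrac{1+\mu}{1-\mu}>1$. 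Using only that $T_{d+1}$ maps $[-1,1]$ into $[-1,1]$ and is strictly increasing on $[1,\infty)$ with $T_{d+1}(1)=1$ (so the denominator is $>1$), one gets $|r(z)|\le 1/T_{d+1}\!\bigl(\tfrac{1+\mu}{1-\mu}\bigr)<1$ for $z\in[\mu,1]$, and $0<r(z)<1$ for $z\in(0,\mu)$ because then $1<T_{d+1}(t)<T_{d+1}\!\bigl(\tfrac{1+\mu}{1-\mu}\bigr)$. Hence $r(z)<1$ on all of $(0,1]$, as required.

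For the minimizer of \cref{eq:l2cost} I would proceed as follows. The objective is a strictly convex, coercive quadratic in the coefficients of $q$ (the map $q\mapsto zq$ is injective into $L^2[0,1]$), so $p$ exists and is unique, and the first‑order optimality condition reads $\int_{0}^{1} r(z)\,z\,h(z)\,dz=0$ for every polynomial $h$ of degree at most $d$; equivalently, $r$ is $L^2[0,1]$‑orthogonal to the span of $\{z,z^{2},\dots,z^{d+1}\}$, which together with $\deg r\le d+1$ and $r(0)=1$ pins $r$ down uniquely (a dimension count). Let $\{L_j\}_{j\ge 0}$ be the Legendre polynomials shifted to $[0,1]$ and normalized so that $L_j(0)=1$; recall the classical facts that they are mutually $L^2[0,1]$‑orthogonal with $\lVert L_j\rVert^2_{L^2[0,1]}=1/(2j+1)$, that $|L_j(z)|\le 1$ on $[0,1]$, that $L_j(1)=(-1)^j$, and that $L_1(z)=1-2z$. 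One then verifies that $\widetilde r(z):=(d+2)^{-2}\sum_{j=0}^{d+1}(2j+1)L_j(z)$ satisfies $\widetilde r(0)=1$ (since $\sum_{j=0}^{d+1}(2j+1)=(d+2)^2$) and is orthogonal to the span of $\{z,\dots,z^{d+1}\}$ (expanding any $g$ in that span as $g=\sum_j \hat g_j L_j$ gives $\langle \widetilde r,g\rangle_{L^2[0,1]}=(d+2)^{-2}\sum_j \hat g_j=(d+2)^{-2}g(0)=0$), so $r=\widetilde r$ by uniqueness. Finally, $|r(z)|\le (d+2)^{-2}\sum_{j=0}^{d+1}(2j+1)|L_j(z)|\le (d+2)^{-2}\sum_{j=0}^{d+1}(2j+1)=1$ on $[0,1]$, and equality at some $z_0$ would force $|L_1(z_0)|=|1-2z_0|=1$, hence $z_0\in\{0,1\}$; the value $z_0=1$ is excluded because the signs $L_j(1)=(-1)^j$ are not all equal (for $d\ge 0$). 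Thus $r(z)<1$ on $(0,1]$, finishing this case.

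The step I expect to be the crux is the identification of $r$ in the $l_2$ case --- recognizing that the Euler--Lagrange condition forces $r$ to be, up to normalization, the reproducing‑kernel section $K(\cdot,0)$ of the space of degree‑$(d+1)$ polynomials in $L^2[0,1]$, equivalently the Legendre average $\widetilde r$ above; once this is in hand the positivity reduces to the textbook bound $|L_j|\le 1$ on $[0,1]$. Minor points that still need care are the treatment of the half‑open domain $(0,1]$ (continuously extending $1-q(z)z$ to $z=0$), the independence argument behind the uniqueness of $r$, and, in the Chebyshev case, covering the subinterval $(0,\mu)$ in addition to $[\mu,1]$.
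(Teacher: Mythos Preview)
Your argument is correct. The paper's own proof is a one--line citation (``This is a consequence of Theorem~4 of \cite{johnson1983}''), so you are supplying what that reference contains rather than taking a different route. Your identification of the $l_2$ minimizer $r$ with the normalized reproducing--kernel section $K(\cdot,0)/K(0,0)=(d+2)^{-2}\sum_{j=0}^{d+1}(2j+1)L_j$ via the Euler--Lagrange orthogonality condition is exactly the shifted--Legendre representation underlying the Johnson--Micchelli--Paul result, and the bound $|L_j|\le 1$ on $[0,1]$ with the sign analysis at the endpoints is the standard way to extract $r<1$ on $(0,1]$; your Chebyshev computation for \cref{eq:linfcost} over both $[\mu,1]$ and $(0,\mu)$ is likewise the expected verification. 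One cosmetic tightening: for the $l_2$ case you only need $r(z)<1$, so it is cleaner to argue directly that $\sum_j(2j+1)(L_j(z)-1)=0$ forces $L_j(z)=1$ for every $j$ (each summand is nonpositive), hence $L_1(z)=1-2z=1$ and $z=0$, which avoids the triangle--inequality detour through $|r|\le 1$ and the separate exclusion of $z=1$.
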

\begin{proof}
This is a consequence of Theorem 4 of \cite{johnson1983}.
\end{proof}

\subsection{Error Bound\label{sec:err}}

With the simplifying assumptions that $A^*A$ is injective and
$\lambda_p = \lambda$, \cref{thm:err} bounds the difference between
the solutions of \cref{eq:ucopt} and \cref{eq:pucopt}.
Let $\lVert \cdot \rVert_{2 \rightarrow 2}$ denote the induced $l_2-$norm.

\begin{theorem}\label{thm:err}
Assume $A^*A$ is injective, and $\lambda = \lambda_p$ in \cref{eq:ucopt}
and \cref{eq:pucopt} respectively.
Let $e_k, \gamma$ and $\delta$ be defined as:
\begin{subequations}\begin{align}
e_k &= y_k - x^\star \\ 
\gamma &= \lVert I - p(A^*A)A^*A\rVert_{2 \rightarrow 2} \\
\delta &= \lVert (I-p(A^*A))A^*(Ax^\star - b) \rVert_2
\end{align}\end{subequations}
Here, $x^\star$ is the optimal solution to \cref{eq:ucopt}, $y_k$ are the
iterates of \cref{alg:ppgd} that leverages polynomial $p$.
Then, the limit of the difference, $e_\infty$, satisfies the following:
\begin{equation}
\lVert e_\infty \rVert_2 \leq \frac{\delta}{1 - \gamma}
\label{eq:errbnd}\end{equation}
\end{theorem}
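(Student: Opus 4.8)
The plan is to extract from the update rule of \cref{alg:ppgd} a single recursion of the form $\lVert e_{k+1}\rVert_2 \le \gamma\,\lVert e_k\rVert_2 + \delta$ and then pass to the limit via a geometric series. The hypothesis $\lambda=\lambda_p$ is what lets the iterates of \cref{alg:ppgd} and the optimum $x^\star$ of \cref{eq:ucopt} be governed by the \emph{same} proximal operator: \cref{thm:fp} applied to \cref{eq:ucopt} gives
\[
x^\star = \text{\bf prox}_{\lambda g}\bigl(x^\star - A^*(Ax^\star-b)\bigr),
\]
whereas \cref{eq:ppgdupdate} reads $y_{k+1} = \text{\bf prox}_{\lambda g}\bigl(y_k - p(A^*A)A^*(Ay_k-b)\bigr)$.

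Subtracting these two identities and applying firm non-expansiveness (\cref{lemma:fne}) bounds $\lVert e_{k+1}\rVert_2$ by the norm of the difference of the two proximal arguments. The one nontrivial algebraic step is to add and subtract $p(A^*A)A^*(Ax^\star-b)$ inside that difference, which separates a part acting linearly on $e_k$ from a $k$-independent bias term:
\[
\bigl(y_k - p(A^*A)A^*(Ay_k-b)\bigr) - \bigl(x^\star - A^*(Ax^\star-b)\bigr) = \bigl(I - p(A^*A)A^*A\bigr)e_k + \bigl(I - p(A^*A)\bigr)A^*(Ax^\star-b).
\]
Taking $l_2$-norms, using the triangle inequality, bounding the first term by $\lVert I - p(A^*A)A^*A\rVert_{2\rightarrow2}\,\lVert e_k\rVert_2=\gamma\lVert e_k\rVert_2$, and recognising the second as exactly $\delta$ yields the recursion $\lVert e_{k+1}\rVert_2 \le \gamma\lVert e_k\rVert_2 + \delta$.

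Unrolling this gives $\lVert e_k\rVert_2 \le \gamma^k\lVert e_0\rVert_2 + \delta\sum_{i=0}^{k-1}\gamma^i$, so it remains to see that $\gamma<1$, and this is precisely where injectivity of $A^*A$ is used. Since $p$ has real coefficients, $I - p(A^*A)A^*A$ is Hermitian and $\gamma=\max_i|1 - p(\sigma_i^2)\sigma_i^2|$ over the eigenvalues $\sigma_i^2$ of $A^*A$; were $A^*A$ singular, the eigenvalue $0$ would force a corresponding eigenvalue $1$ and $\gamma=1$. Under injectivity (and finite dimension) the $\sigma_i^2$ lie in some $[\mu,1]$ with $\mu>0$, and by \cref{thm:ppos} $p(z)>0$ on $(0,1]$, so each $1 - p(\sigma_i^2)\sigma_i^2<1$; combined with the boundedness of $p(z)z$ on $[\mu,1]$ guaranteed by the construction of $p$ in \cref{sec:poly} (cf.\ \cref{thm:ppos}), one obtains $\gamma<1$. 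With $\gamma<1$, \cref{alg:ppgd} converges to the minimizer $y^\star$ of \cref{eq:pucopt} (the argument for convergence of PGD applied to \cref{eq:pucopt}), so $e_\infty=y^\star-x^\star$ is well defined, and letting $k\to\infty$ in the unrolled bound gives $\lVert e_\infty\rVert_2 \le \delta/(1-\gamma)$.

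I expect the main obstacle to be the verification that $\gamma<1$ rather than the algebra: it must combine the positivity and boundedness of the optimized polynomial with the injectivity hypothesis (and, in an infinite-dimensional setting, would further need the spectrum of $A^*A$ to be bounded away from zero). Once $\gamma<1$ is established, the remainder is the short computation sketched above, the only real trick being the choice of which term to add and subtract so as to isolate the contraction factor $\gamma$ from the bias $\delta$.
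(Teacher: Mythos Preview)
Your proposal is correct and follows essentially the same approach as the paper: subtract the fixed-point identity \cref{eq:pgdfp} from the preconditioned update \cref{eq:ppgdupdate}, apply \cref{lemma:fne} and the triangle inequality to obtain the recursion $\lVert e_{k+1}\rVert_2 \le \gamma\lVert e_k\rVert_2 + \delta$, and then use $\gamma<1$ (from injectivity of $A^*A$ together with the polynomial optimization) to pass to the limit. Your write-up is in fact more explicit than the paper's on two points---the add-and-subtract of $p(A^*A)A^*(Ax^\star-b)$ that isolates the contraction term from the bias, and the eigenvalue-level discussion of why $\gamma<1$---but the underlying argument is the same.
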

\begin{proof}
Subtracting \cref{eq:ppgdupdate} from \cref{eq:pgdfp}, taking the $l_2-$norm of
the difference and utilizing \cref{lemma:fne} with the triangle inequality
yields:
\begin{equation}
\lVert e_{k+1}\rVert_2 \leq \lVert (I - PA^*A)e_k \rVert_2 +
                            \lVert (I-P)A^*(Ax^\star - b) \rVert_2
\end{equation}
This reduces to:
\begin{equation}
\lVert e_{k+1}\rVert_2 \leq \gamma \lVert e_k \rVert_2 + \delta
\label{eq:series}\end{equation}
As $A^*A$ is injective, it follows that, after optimizing for $P$
via \cref{eq:l2cost} or \cref{eq:linfcost}, $\gamma < 1$.
This in-turn implies \cref{eq:series} converges and the limit
$e_\infty$ satisfies \cref{eq:errbnd}.
\end{proof}

For most regularized linear inverse problems, $\lVert Ax^\star - b\rVert$ is
small, which in turn implies the error between \cref{eq:ucopt} and
\cref{eq:pucopt} is also small as:
\begin{equation}
\delta \leq \lVert I - p\,(A^*A)\rVert_{2\rightarrow 2} \;
\lVert A^* (Ax^\star - b) \rVert_2
\end{equation}

\subsection{Implementation Details and Complexity Analysis}\label{sec:complex}

Prior works \cite{johnson1983, bhotto2015} that used polynomial preconditioning
utilized the matrix entries of $A$ to explicitly pre-calculate $p(A^*A)$ to
save computation time.
Since this work aims to leverage matrix-free implementations of $A$, this is no
longer possible.
Rather, utilizing a polynomial preconditioner of degree $d$ involves $(d + 1)$
$A^*A$ evaluations per iteration, thereby increasing the per-iteration cost
compared to PGD and FISTA.
However, the main benefit of the preconditioner is that, for the same number of
$A^*A$ evaluations, the components of the iterates corresponding to the smaller
eigenvalues of $A^*A$ are more explicitly targeted.
To make sure this point is well reflected in the sequel, the experiments in
\Cref{sec:experiments} reports convergence as a function of the total number of
$A^*A$ evaluations (and the number of iterations, which reflects the number of
proximal operator evaluations) in addition to reporting the iterative
convergence as a function of wall time to demonstrate real world performance.

To give an example, \cref{fig:complexity} depicts the iteration progress
assuming a degree $3$ polynomial optimized using \cref{eq:l2cost} after
$4\;A^*A$ evaluations for \cref{alg:pgd} and \cref{alg:ppgd}, which shows that
the polynomial preconditioner aids the convergence of components corresponding
to the lower eigenvalues of $A^*A$ at the cost of slightly slower convergence
of the higher eigenvalues.
In the sequel, the degree of the polynomial is a hyper-parameter that is tuned
for the application of interest.

\begin{figure}[ht]
  \centering
  \includegraphics[width=\textwidth]{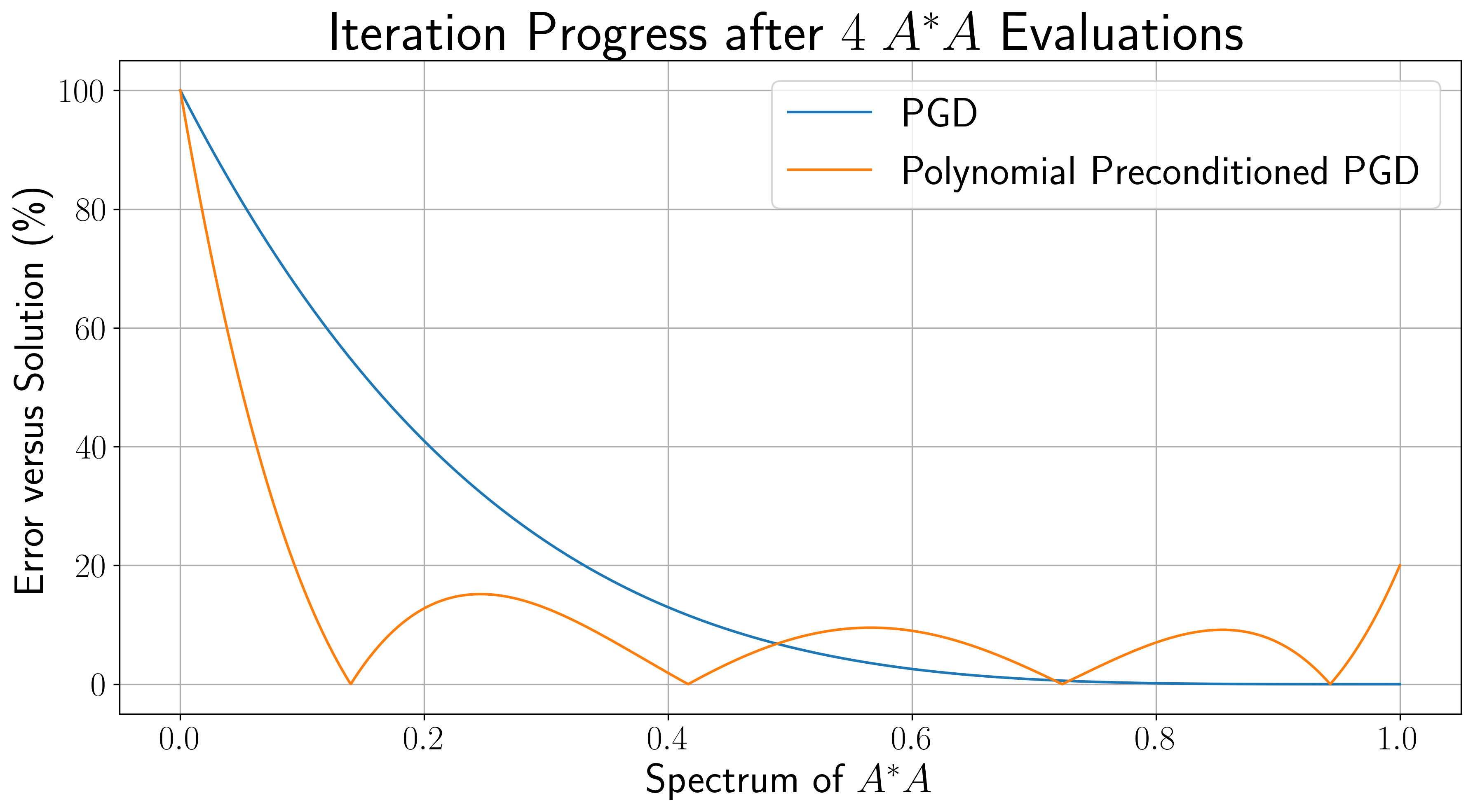}
  \caption{
    Iteration progress of \cref{alg:pgd} and \cref{alg:ppgd} assuming a degree
    $3$ polynomial optimized using \cref{eq:l2cost} after $4 A^*A$ evaluations.
    The x-axis denotes the spectrum of $A^*A$.
    The y-axis denotes the percentage of decrease of error with respect to
    the respective final iterates.
    This figure shows that the polynomial preconditioner aids the convergence
    of components corresponding to the lower eigenvalues of $A^*A$ at the cost
    of slower convergence of the higher eigenvalues.
  }
  \label{fig:complexity}
\end{figure}

\begin{remark}
Note that for both, numerical stability and achieving $(d + 1)$ evaluations of
$A^*A$ per iteration, it is important to leverage the polynomial structure when
evaluating the preconditioner.
For example, consider pseudo-code implementation in Listing \ref{pcode} that
utilizes recursion.
It is assumed that ``ListOfPolynomialCoefficients'' returns the coefficients of
$p$ in the order:
\begin{equation}p(x) = \sum_{i=0}^d c[i] x^i \end{equation}
\end{remark}

\begin{lstlisting}[float, language=Python, frame=single, label=pcode,
                   caption=Pseudo-code for Polynomial Evaluation]
I = IdentityOperator()
N = ForwardModelNormalOperator()

def CreatePolynomialPreconditioner(c):
  if len(c) == 1:
    return c[0] * I
  else:
    return c[0] * I + N * CreatePolynomialPreconditioner(c[1:])

coeffs = ListOfPolynomialCoefficients()
Preconditioner = CreatePolynomialPreconditioner(coeffs)
\end{lstlisting}

\subsection{Interpretation and Noise Coloring}

At first glance, the proposed preconditioner is similar to the
left-preconditioning methods such as \cref{eq:ducopt}.
However, instead of using a diagonal matrix to weight to the entries
(or coordinates) of the measurement $b$, $P^{1/2}$ ``spectrally'' weights $b$.
In other words, $P^{1/2}$ is a diagonal matrix with respect to the left
singular vectors of $A$ (or $\{u_i\}$).
A unique consequence of this property is that, in the absence of
regularization, polynomial preconditioning result in the same solution as
\cref{eq:ucopt} and thus introduces no noise coloring.
This is not necessarily true for general left-preconditioning methods.

\begin{theorem}
When $\lambda = \lambda_p = 0$ in \cref{eq:ucopt} and \cref{eq:pucopt}
respectively, the solutions to \cref{eq:ucopt} and \cref{eq:pucopt} are
identical as long as both optimizations are equivalently initialized.
\end{theorem}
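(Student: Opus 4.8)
The plan is to reduce both problems to (preconditioned) gradient descent and then argue that they share the same set of minimizers and, under a common initialization, the same limit point. With $\lambda = \lambda_p = 0$ the regularizers vanish, so $\text{\bf prox}_{\lambda g}$ and $\text{\bf prox}_{\lambda_p g}$ are the identity map; hence \cref{alg:pgd} collapses to $x_{k+1} = x_k - A^*(Ax_k - b)$ and \cref{alg:ppgd} to $y_{k+1} = y_k - p(A^*A)A^*(Ay_k - b)$. The first step is to show the two objectives have identical minimizer sets. The minimizers of $\frac12\lVert Ax - b\rVert_2^2$ are exactly the solutions of the normal equations $A^*(Ax-b) = 0$, while those of $\frac12\lVert P^{1/2}(Ax - b)\rVert_2^2$ solve $A^*P(Ax-b) = 0$; by \cref{lemma:permute} the latter is $p(A^*A)\,(A^*Ax - A^*b) = 0$. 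Now $A^*Ax - A^*b \in \text{range}(A^*) = \text{null}(A)^\perp$, and by \cref{thm:ppos} the operator $p(A^*A)$ is positive (hence injective) on $\text{null}(A)^\perp$, so $p(A^*A)(A^*Ax - A^*b) = 0$ if and only if $A^*Ax = A^*b$. Thus both solution sets equal the same affine set $S = x_\perp + \text{null}(A)$, where $x_\perp$ is its unique element in $\text{null}(A)^\perp$ (the minimum-norm least-squares solution).

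The second step tracks the two iterations relative to the orthogonal splitting $\mathbb{C}^n = \text{null}(A) \oplus \text{null}(A)^\perp$. Both update directions lie in $\text{range}(A^*) = \text{null}(A)^\perp$: immediately for $A^*(Ax_k - b)$, and for $p(A^*A)A^*(Ay_k - b)$ because $p(A^*A)$ leaves $\text{null}(A)^\perp$ invariant (each summand $c_i(A^*A)^i$ does, as $A^*A$ is self-adjoint with $\text{null}(A^*A) = \text{null}(A)$). Hence the $\text{null}(A)$-components are frozen across iterations, and by hypothesis these initial components agree. On the complementary subspace, writing $\tilde x_k$ for the $\text{null}(A)^\perp$-component and using $Ax_k = A\tilde x_k$, the plain iteration is the affine map $\tilde x_{k+1} = (I - A^*A)\tilde x_k + A^*b$, a contraction on $\text{null}(A)^\perp$ with factor $\max_i (1-\sigma_i^2) < 1$ since each $\sigma_i \in (0,1]$; its fixed point solves $A^*A\tilde x = A^*b$ in $\text{null}(A)^\perp$, i.e. equals $x_\perp$. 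The preconditioned iteration is $\tilde y_{k+1} = (I - p(A^*A)A^*A)\tilde y_k + p(A^*A)A^*b$, an affine contraction on $\text{null}(A)^\perp$ with factor $\max_i |1 - p(\sigma_i^2)\sigma_i^2|$, which is $< 1$ by the same reasoning that yields $\gamma < 1$ in \cref{thm:err}; its fixed point solves $p(A^*A)(A^*A\tilde y - A^*b) = 0$ in $\text{null}(A)^\perp$, hence also equals $x_\perp$ by the injectivity noted above. Restoring the frozen null component gives $\lim_k x_k = P_{\text{null}(A)}x_0 + x_\perp = P_{\text{null}(A)}y_0 + x_\perp = \lim_k y_k$, which is the assertion.

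The step I expect to be the main obstacle is confirming that the preconditioned iteration genuinely contracts on $\text{null}(A)^\perp$, i.e. that $p(\sigma^2)\sigma^2 \in (0,2)$ for every nonzero singular value $\sigma$: the lower bound is handed to us by \cref{thm:ppos}, but the upper bound $p(\sigma^2)\sigma^2 < 2$ is the delicate point. I would handle it by observing that it is precisely the restriction to $\text{null}(A)^\perp$ of the bound $\gamma < 1$ already invoked in \cref{thm:err}, or, if a self-contained argument is wanted, derive it directly from the structure of the $L^2$-optimal polynomial of \cref{eq:l2cost}. Alternatively, if one reads the theorem as asserting only that the two solution \emph{sets} coincide and that a common initialization selects the same element through the frozen null-space component, then the first step together with the invariance argument already suffices, and the only remaining care is the pseudoinverse bookkeeping in the rank-deficient case.
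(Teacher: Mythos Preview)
Your argument is correct and considerably more thorough than the paper's. The paper gives a three-line SVD proof: writing $A$ in dyadic form, a point $x$ minimizes $\tfrac12\lVert Ax-b\rVert_2^2$ iff $\sigma_i\langle x,v_i\rangle=\langle b,u_i\rangle$ for each $i$, and since $p(\sigma_i^2)>0$ this is equivalent to $p(\sigma_i^2)^{1/2}\sigma_i\langle x,v_i\rangle=p(\sigma_i^2)^{1/2}\langle b,u_i\rangle$, i.e.\ to minimizing $\tfrac12\lVert P^{1/2}(Ax-b)\rVert_2^2$. That is the whole proof: it establishes only that the two \emph{solution sets} coincide, and does not touch the iteration or the initialization clause at all. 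Your route via the normal equations and the $\text{null}(A)\oplus\text{null}(A)^\perp$ splitting proves the same set equality by an essentially equivalent mechanism, but then goes further and actually justifies the phrase ``equivalently initialized'' by showing the null-space component is frozen and the orthogonal component is driven to the unique $x_\perp$ by either iteration. What the paper's approach buys is brevity; what yours buys is an honest treatment of the rank-deficient case and of why the two algorithms select the \emph{same} element of the common affine solution set. The contraction bound $p(\sigma^2)\sigma^2<2$ that you flag is indeed not supplied anywhere in the paper (the invocation of \cref{thm:err} is under an injectivity hypothesis and does not prove it from scratch either), so if you want a fully self-contained argument you would need to extract it from the structure of the $L^2$-optimal polynomial; but for matching the paper's level of rigor, your first step together with the null-space invariance already suffices, as you note in your closing remark.
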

\begin{proof} Using \cref{eq:svd1} and \cref{eq:psvd},
\begin{equation}\begin{aligned}
x \text{ solves } \cref{eq:ucopt}
  &\iff& \sigma_i \langle x, v_i \rangle = \langle b, u_i\rangle\\
  &\iff& p(\sigma_i^2)^\frac{1}{2} \sigma_i \langle x, v_i \rangle = 
         p(\sigma_i^2)^\frac{1}{2} \langle b, u_i\rangle\\
  &\iff& x \text{ solves } \cref{eq:pucopt}
\end{aligned}\label{eq:samel2}\end{equation}
\end{proof}

\section{MRI Experiments}\label{sec:experiments}

In the spirit of reproducible research, the data and code used to
perform the following experiments can be found at:
\begin{center}
\url{https://github.com/sidward/ppcs}%
\footnote{\url{https://doi.org/10.5281/zenodo.6475880}}
\end{center}
All reconstructions were implemented in the Python programming
language using
SigPy\footnote{\url{https://doi.org/10.5281/zenodo.5893788}} \cite{ong2019}.
The polynomial optimizations \cref{eq:l2cost} were performed using
SymPy\footnote{\url{https://doi.org/10.7717/peerj-cs.103}}
\cite{sympy}, with the latter leveraging an excellent Chebyshev polynomial
package available at \url{https://github.com/mlazaric/Chebyshev}%
\footnote{\url{https://doi.org/10.5281/zenodo.5831845}}.

To verify the efficacy of the preconditioner, three varied MRI reconstructions
were studied using the unconstrained formulations \cref{eq:ucopt} and
\cref{eq:pucopt}.
The Nesterov accelerated variants, \cref{alg:fista} and \cref{alg:pfista},
were used over their respective non-accelerated counterparts.
For all cases, the corresponding measurement matrix $A$ was normalized to have
a unitary induced $l_2$-norm, and the measurement vector $b$ was normalized to
have unitary $l_2$-norm.
All experiments were performed on an Intel (R) Xeon Gold 5320 CPU and
NVIDIA(R) RTX A6000 GPUs.

\subsection{Parameter Selection}\label{sec:param}

Before discussing the specific experiments, this section will describe the
efforts undertaken to ensure that each algorithm is portrayed in the best
light.

Solving linear inverse problems posed as \cref{eq:ucopt} and \cref{eq:pucopt}
using FISTA, IFISTA, ADMM and the proposed method requires the user to specify
the regularization value (for each algorithm) and the desired number of $A^*A$
evaluations per iteration (for each algorithm modulo FISTA).
FISTA utilizes one $A^*A$ evaluation per iteration.
IFISTA and the proposed polynomial preconditioner, when using a polynomial $d$,
utilizes $(d + 1)\;A^*A$ evaluations per iteration.
ADMM utilizing $n\;A^*A$ evaluations per iteration implies $(n - 1)$ conjugate
gradient (CG) iterations during the inner loop of ADMM (as one $A^*A$
evaluation is needed to calculate the initial residual term in the CG
algorithm).
IFISTA and the proposed method require additional tuning of the regularization
value as argued by the introduction of $\lambda_p$ in \cref{eq:pucopt}.

The convergence of the iterates of the respective algorithms and the quality of
the converged result are dependent on the regularization value chosen and the
number of $A^*A$ evaluations per iteration.
A fair comparison between the algorithms necessitates a search through the
large parameter space of regularization values and the number of normal
evaluations.
It is difficult to manually choose parameters based on the observed
convergence and reconstruction quality.
To overcome this, the following programmatic procedure is used to select the
parameters.
\begin{enumerate}
\item
  For all algorithms, the maximum number of total $A^*A$ evaluations is fixed.
\item
  The minimum normalized root mean squared error (NRMSE) achieved by FISTA
  (with respect to the reference image) among all tested regularization values
  $(\lambda)$ is noted as $\epsilon_f\,\%$.
\item
  For each method, if a parameter set yields a reconstruction with NRMSE (with
  respect to the reference image) that is greater than $(\epsilon_f + 2)\%$,
  then that parameter set is discarded.
  The ``$+2$'' term was set to account for the finite resolution of the grid
  search of regularization values $(\lambda)$ and potential numerical errors.
  This ``$+2$'' term was observed to yield qualitatively similar reconstruction
  to the minimum NRMSE FISTA result for each of the following experiments.
  Note that the reconstruction error achieved might be lower than
  $\epsilon_f\%$.
\item 
  For each method, the convergence curves (with respect to the last iterate) as
  a function of the observed wall-time is calculated for the remaining
  parameter sets that satisfy the $(\epsilon_f + 2)\%$ constraint.
\item
  A log-linear fitting of the convergence curves is performed, and the
  parameters with the most negative slope (after the log-linear fitting) is
  chosen to represent the method.
\end{enumerate}

By selecting parameters that yield reconstructions that fall within
$(\epsilon_f + 2)\%$ NRMSE of the reference, the reconstruction quality is
assured.
Given the subset of parameters, the convergence is calculated with respect to
the last iterate of the respective methods instead of the reference image to
directly show-case iterative convergence, which allows the resulting plots
in the following experiments to be directly interpreted in the context of
\cref{eq:pgdcnv,eq:ppgdcnv}.

Please see \Cref{sec:exp1} for a visual example of the parameter selection
process proposed in this section.

\subsection{Cartesian MRI}\label{sec:exp1}

The first experiment is a 2D-Cartesian Compressed Sensing Knee application
using publicly available data \cite{ong2018, sawyer2013}.
The reference data was acquired using a 3D-FSE CUBE acquisition with proton
density weighting that included fat saturation\cite{sawyer2013} on a 3T
whole-body scanner (GE Healthcare, Waukesha, WI) using an 8-channel HD knee
coil (GE Healthcare, Milwaukee, WI,USA) with an echo time (TE) of 25ms and
repetition time (TR) of 1550ms.
The field-of-view was 160mm, the matrix size was $320\times320$, slice
thickness was 0.6mm and 256 slices were acquired.
The reference data was fully-sampled and satisfied the Nyquist criterion.
This reference data was retrospectively under-sampled by approximately $R=7.21$
times using a variable density Poisson disc sampling mask generated by
BART\cite{uecker2015}.

The unconstrained reconstruction formulation \cref{eq:ucopt} for this
experiment is as follows:
\begin{equation}\label{eq:exp1}
\begin{aligned}
x^\star &=& \left\{\underset{x}{\text{argmin}} \; \frac{1}{2}
        \lVert MFS x - b \rVert_2^2 + \lambda \lVert W x\rVert_1\right.
\end{aligned}
\end{equation}
Here, $W$ is the forward Daubechies-4 Wavelet transform, $S$ is the SENSE
model of the parallel-imaging acquisition \cite{pruessmann1999} estimated
using \cite{uecker2014}, $F$ is the 2D-Fourier transform and $M$ is the
Poisson disc sampling mask.

This application was solved with FISTA, IFISTA, ADMM and FISTA with the
polynomial preconditioner.
Each algorithm was allowed to run for a maximum of 60 $A^*A$ evaluations.
For FISTA, IFISTA and polynomial preconditioned FISTA, regularization values
$\lambda$ were varied as:
\begin{equation}
\lambda \in \left\{\frac{10^{-2}}{1.5^k} : k = 0, 1, \dots, 14 \right\}
\label{eq:lamda}
\end{equation}
For IFISTA, ADMM and the polynomial preconditioned FISTA, the number of
$A^*A$ evaluations were varied as:
\begin{equation}
A^*A \text{ evaluations per iteration} \in \left\{2, 3, 4, 5, 6\right\}
\label{eq:aha}
\end{equation}
The number of $A^*A$ evaluations per iteration were chosen to cleanly divide
the maximum number of $A^*A$ evaluations for fair comparison between methods.
Lastly, the regularization value from the FISTA experiments with the least
NRMSE against the reference image was used for ADMM, with the step size
$(\rho)$ varied as:
\begin{equation}
\rho \in \left\{ 3^k : k = -7, -6, \dots, -1, 0, 1, \dots, 6, 7\right\}
\label{eq:rho}
\end{equation}

\begin{figure}[ht]
  \centering
  \includegraphics[width=0.75\textwidth]{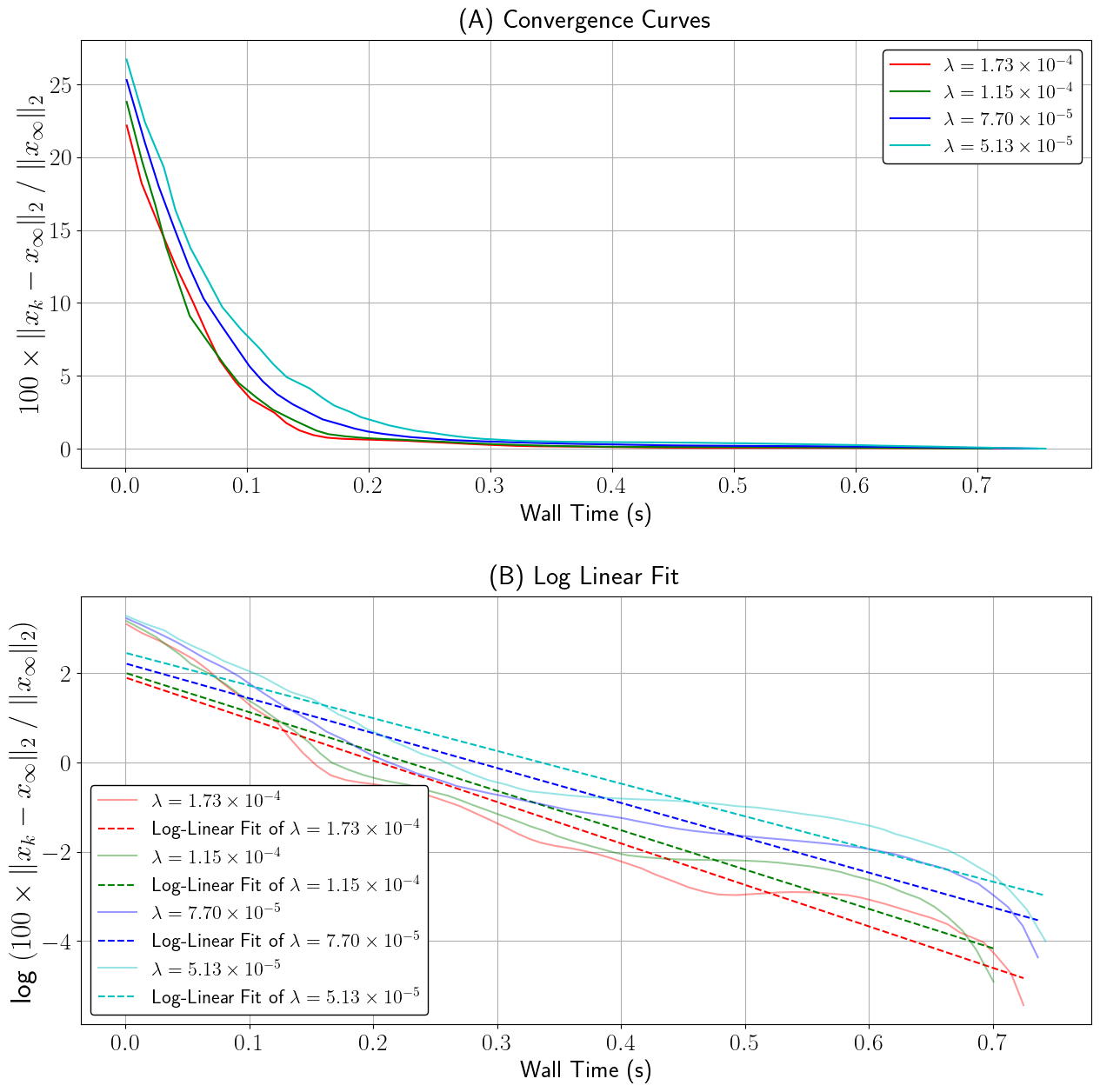}
  \caption{
    \textbf{Parameter Selection.} This figure demonstrates how the
    hyper-parameters were chosen to represent FISTA.
    Among the hyper-parameters tested in \cref{eq:lamda}, the minimum NRMSE
    against the reference image achieved by FISTA was $\epsilon_f = 13.61\%$.
    The four $\lambda$ values shown in (A) and (B) achieved
    $(\epsilon_f + 2)\%$ NRMSE.
    In order to programmatically pick the fastest converging value between the
    listed four values, a log-linear fitting of the convergence curves is
    performed, as shown in (B).
    The parameter with the most negative log-linear slope is chosen to
    represent the algorithm, which in this case is the
    $\lambda=1.73 \times 10^{-4}$ curve.
  }
  \label{fig:logfit}
\end{figure}
To give a visual example of how the programmatic selection works
(\Cref{sec:param}), among the $\lambda$ values tested in \cref{eq:lamda},
the minimum NRMSE achieved by FISTA was $\epsilon_f=13.61\%$.
It is observed that for FISTA, four different regularization values $\lambda$
achieved $(\epsilon_f + 2)\%$ or less NRMSE with respect to the reference
image.
Qualitatively, the $\lambda=1.73 \times 10^{-4}$ curve achieved the fastest
iterative convergence as shown in \Cref{fig:logfit}(A).
This is seen to translate to the most negative slope for the corresponding
log-linear fit as shown in \Cref{fig:logfit}(B) and is thus the parametric
value chosen to represent FISTA.

For IFISTA, ADMM and the proposed method, the identical process was used (with
the same $(\epsilon_f + 2)\%$ NRMSE condition) to choose representative
regularization value and number of $A^*A$ evaluations per iteration.

The convergence curves of the respective methods after the programmatically
chosen parameters with respect to both, the number of $A^*A$ evaluations and
observed wall times, are depicted in \Cref{fig:ccs_conv}, and the final
iterates of the corresponding methods are depicted in \Cref{fig:ccs_recons}
along with the sampling mask $M$ leveraged in \cref{eq:exp1}.

\begin{figure}[ht]
  \centering
  \includegraphics[width=\textwidth]{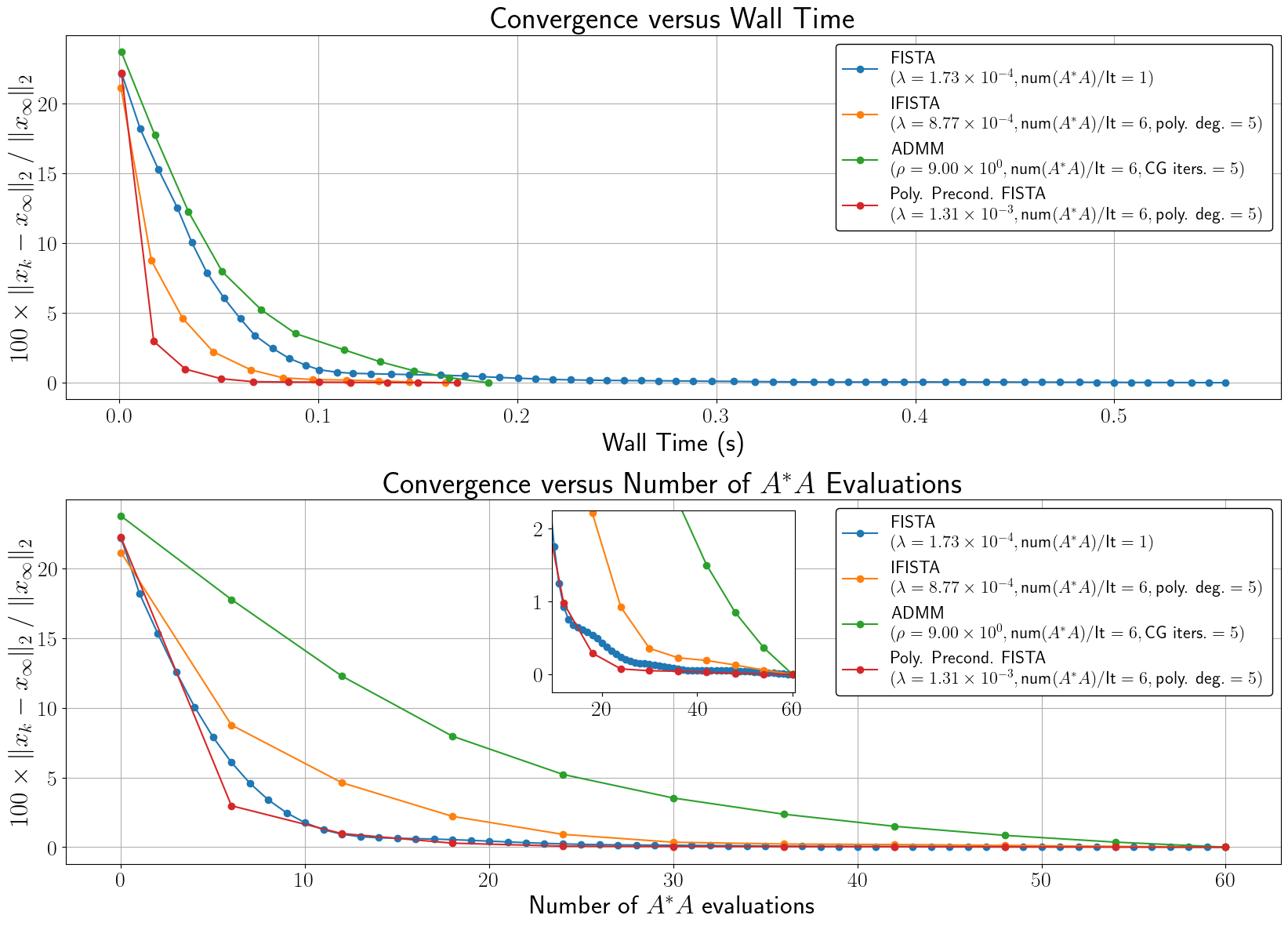}
  \caption{\textbf{Cartesian MRI Convergence Results.}
    This figure depicts the convergence results of the respective methods
    described in \Cref{sec:exp1} given the hyper-parameters selected as in
    \Cref{sec:param} with $\epsilon_f=13.61\%$.
    Given the chosen hyper-parameters, the error over iterations with respect
    to the last iteration of each respective method is plotted.
    The $k^{th}$ iteration and last iteration are labelled as $x_k$ and
    $x_\infty$ respectively.
    The legend on the top right depicts the algorithm and the chosen
    hyper-parameters.
    The x-axis of the top and bottom subplots denotes the total number of
    $A^*A$ evaluations and measured wall-times respectively.
    The circular-markings on each line denote the respective iteration points.
    The number of proximal operators evaluated by a point on the x-axis is
    equal to the number of iterations by that point.
  }
  \label{fig:ccs_conv}
\end{figure}

\subsection{Non-Cartesian MRI}\label{sec:exp2}
The second experiment is a 2D non-Cartesian variable-density spiral brain
application using data that is publicly available at the code
repository\footnote{\url{https://github.com/sidward/ppcs}}.
Reference data was acquired on a 3T scanner (GE Healthcare, Waukesha, WI) with
IRB approval and informed consent obtained using 32 coils with a variable
density spiral trajectory at 1mm $\times$ 1mm resolution and a field-of-view
of 220mm $\times$ 220mm.
The trajectory was designed with 16-fold in-plane under-sampling at the
center of k-space and a linearly increasing under-sampling rate up to
32-fold at the edge of k-space.
The readout duration was 6.7ms.
This reference data was SVD coil-compressed to 12 coils, and then
retrospectively under-sampled by discarding every other interleave out of 32
interleaves.

The unconstrained reconstruction formulation \cref{eq:ucopt} for this
experiment is as follows:
\begin{equation}\label{eq:exp2}
\begin{aligned}
x^\star &=& \left\{\underset{x}{\text{argmin}} \; \frac{1}{2}
        \lVert \mathcal{F}S x - b \rVert_2^2 + \lambda \lVert Wx\rVert_1\right.
\end{aligned}
\end{equation}
Here, $W$ is the forward Daubechies-4 Wavelet transform, $S$ is the SENSE
model of the parallel-imaging acquisition \cite{pruessmann1999} estimated
using \cite{uecker2014}, and $\mathcal{F}$ is the non-uniform Fourier
transform.
This experiment utilized the Toeplitz structure of $\mathcal{F}^*\mathcal{F}$
for faster evaluation for both \cref{eq:ucopt} and \cref{eq:pucopt}%
\cite{wajer2001, fessler2005, baron2018}.

This application was solved with FISTA, IFISTA, ADMM and FISTA with the
polynomial preconditioner.
The regularization values $(\lambda, \rho)$ tested were the same as in
\Cref{sec:exp1}.
The maximum number of $A^*A$ set to $80$.
For IFISTA, ADMM and the polynomial preconditioned FISTA, the number of
$A^*A$ evaluations were varied as:
\begin{equation}
A^*A \text{ evaluations per iteration} \in \left\{2, 4, 5, 8, 10\right\}
\label{eq:aha2}
\end{equation}
The number of $A^*A$ evaluations per iteration were chosen to cleanly divide
the maximum number of $A^*A$ evaluations for fair comparison between methods.

The same programmatic procedure as in \Cref{sec:param} was used to select
representative parameters for each method with the best FISTA reconstruction
achieving an NRMSE of $\epsilon_f=16.03\%$.
The convergence curves of the respective methods after the programmatically
chosen parameters with respect to both, the number of $A^*A$ evaluations and
observed wall times, are depicted in \Cref{fig:ncr_conv}, and the final
iterates of the corresponding methods are depicted in \Cref{fig:ncr_recons}.
\Cref{fig:ncr_trj} shows the utilized trajectory in \cref{eq:exp2}.

\begin{figure}[ht]
  \centering
  \includegraphics[width=\textwidth]{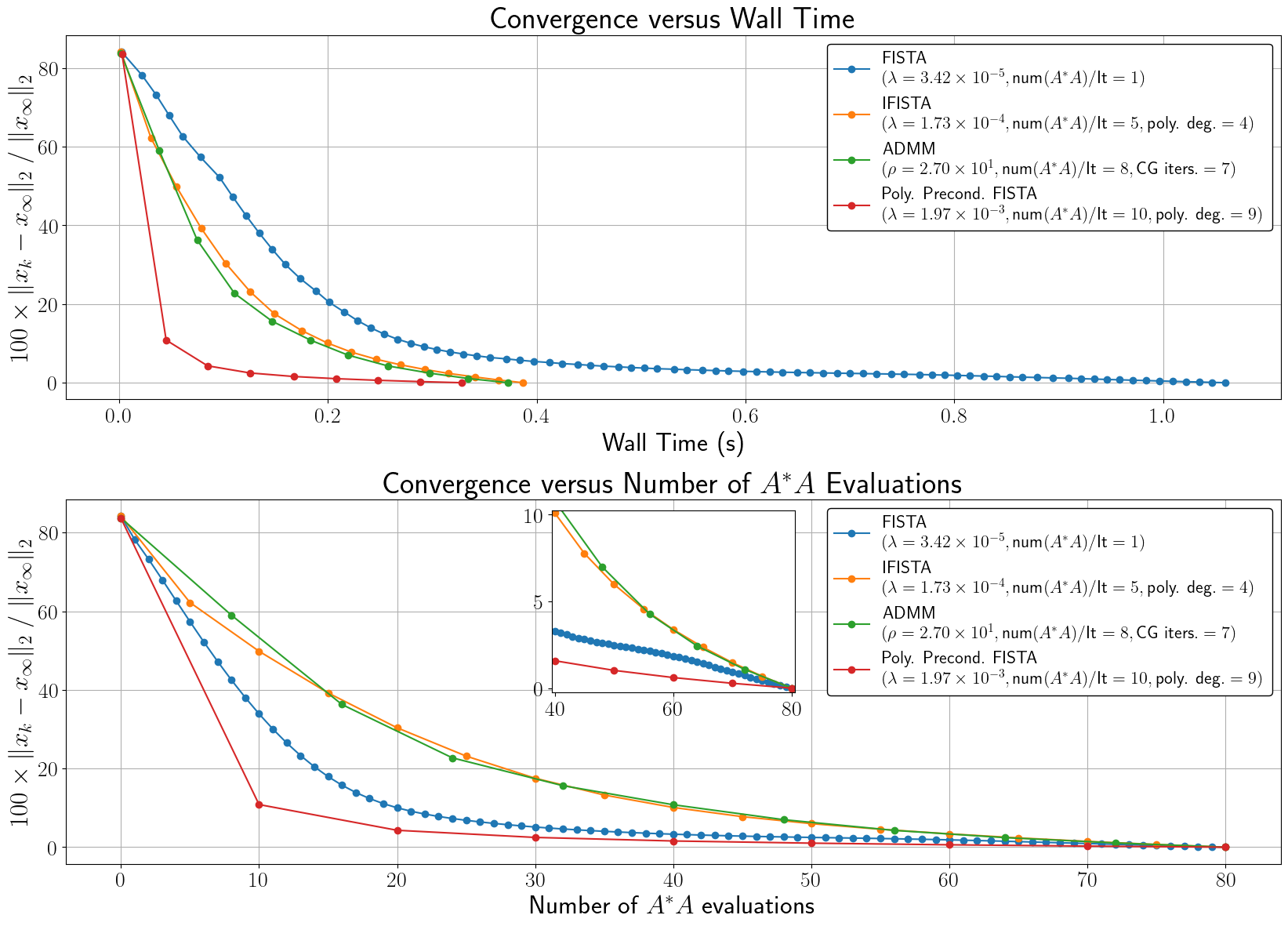}
  \caption{\textbf{Non-Cartesian MRI Convergence Results.}
    This figure depicts the convergence results of the respective methods
    described in \Cref{sec:exp2} given the hyper-parameters selected as in
    \Cref{sec:param} with $\epsilon_f=16.03\%$.
    Given the chosen hyper-parameters, the error over iterations with respect
    to the last iteration of each respective method is plotted.
    The $k^{th}$ iteration and last iteration are labelled as $x_k$ and
    $x_\infty$ respectively.
    The legend on the top right depicts the algorithm and the chosen
    hyper-parameters.
    The x-axis of the top and bottom subplots denotes the total number of
    $A^*A$ evaluations and measured wall-times respectively.
    The circular-markings on each line denote the respective iteration points.
    The number of proximal operators evaluated by a point on the x-axis is
    equal to the number of iterations by that point.
  }
  \label{fig:ncr_conv}
\end{figure}

\subsection{Spatio-Temporal MRI}\label{sec:exp3}
The third experiment is a 3D non-Cartesian spatio-temporal brain application.
The data were obtained on a 3T scanner (GE Healthcare, Waukesha, WI) with IRB
approval and informed consent obtained.
The reference data were acquired with 48 channel coils using a 3D
MRF \cite{ma2013} acquisition with the TGAS-SPI trajectory with a total of 48
groups acquired to achieve adequate 3D k-space encoding at each temporal data
point \cite{cao2022}.

Each acquisition group contains an adiabatic inversion preparation with
TI of 15ms and 500 variable flip-angle acquisitions with TR of 12.5 ms and TE
of 1.75ms, with a 1.2s wait time for signal recovery to improve the
signal-to-noise ratio, resulting in a net acquisition per group of 7.45s and
total acquisition time of approximately 6 minutes.
Additionally, a water-exciting rectangular pulse with duration of
2.38ms was used to depress the fat signal \cite{norbeck2020}.
A variable density spiral trajectory with a 16-fold in-plane
under-sampling rate at the center of k-space and a 32-fold under-sampling
rate at the edge of k-space was used to achieve an encoding at 1-mm
isotropic resolution with a field-of-view of
$220\times220\times220\,\text{mm}^3$ with a readout duration of 6.7 ms.
This reference data was coil-compressed to 10 coils (from 48) using
a combination of \cite{kim2021} and SVD coil compression.
Retrospective under-sampling was performed to simulate a 1 minute
acquisition.
The data used is publicly available at the code
repository\footnote{\url{https://github.com/sidward/ppcs}}.

The unconstrained reconstruction formulation \cref{eq:ucopt} for this
experiment is as follows
\cite{liang2007, petzschner2011, merry2015, zhao2018, tamir2017, cao2022}:
\begin{equation}
x^\star = \left\{\underset{x}{\text{argmin}} \; \frac{1}{2}
      \lVert D^{1/2}\left(\mathcal{F} S \Phi x - b \right)\rVert_2^2 + \lambda
      \text{LLR}(x) \right.
\label{eq:exp3}
\end{equation}
Here, LLR is the locally-low rank constraint \cite{tamir2017} with a block-size
of $8\times8\times8$, $S$ is the SENSE model of the parallel-imaging
acquisition \cite{pruessmann1999} estimated using \cite{ying2007} and
$\mathcal{F}$ is the 3D non-uniform Fourier transform.
Note that in this case, the reconstructed $x$ consists of multiple
``coefficient'' images such that $\Phi x$ recovers the temporal
evolution of the underlying signal.
Please see \cite{liang2007, ma2013, tamir2017, cao2022, wang2019, huang2013,
zhao2015, velikina2015, ben2015, velikina2013} for more information.
\cref{eq:exp3} leverages structural left-preconditioning $D$ for faster
convergence.
The particular method used to derive $D$ was \cite{pipe1999}.
Since the structural left-preconditioner $(D)$ derived from \cite{pipe1999}
differs between the 6-minute acquisition and 1-minute acquisition, the
1-minute FISTA reconstruction of the data is used as a reference for fair
comparison.

The inclusion of $D$ reduces the theoretical efficacy of the polynomial
preconditioner as the resulting normal operator when solving \cref{eq:exp3}
(i.e. $A^*DA$) has a much narrower eigenvalue spectrum compared to $A^*A$.
That being said, utilizing the polynomial preconditioner (which can be directly
applied to \cref{eq:exp3} thanks to its generalizability) can still be
beneficial to reduce real world reconstruction times.
In particular, given the same number of $A^*A$ evaluations, the polynomial
preconditioner achieves qualitatively similar reconstruction to the non
polynomial preconditioned result while utilizing fewer proximal calls.
For applications where evaluating $A^*A$ is much faster than a proximal
call (such as in the sequel), the fewer proximal evaluations yields significant
reductions in processing times.

For this TGAS-SPI-MRF application, \cref{eq:exp3} is too big to solve directly
on even high-end GPUs without utilizing some form of ``batching'' that moves
data between CPU memory and GPU memory when evaluating matrix-vector products
of $A$ and $A^*$.
This increases the per-iteration costs.
Instead, an ADMM formulation is leveraged to split \cref{eq:exp3} into smaller
sub problems, where each sub problem completely fits (that is to say,
there is no need to implement any kind of batching) in approximately 20GB of
GPU memory.
Each sub problem can then be solved in parallel on separate GPU devices with
significantly fewer data transfers between CPU and GPU devices.

The sequel uses 3 GPU devices, but the available code%
\footnote{\url{https://github.com/sidward/ppcs}} generalizes to the number of
available GPUs.
Let $(A_1, b_1)$ denote the forward model and acquired data of the first three
(of ten) receiver channels, $(A_2, b_2)$ denote the next three, and
$(A_3, b_3)$ denote the following three.
The last coil is discarded for now so that each device processes an equal
number of channel data, but future work will be to explore better ways to
integrate all available coils when the number of GPU devices do not cleanly
divide the number of coils.

Sub problems $f_1, f_2$ and $f_3$ are then defined as:
\begin{equation}\begin{array}{c}
f_i=\frac{1}{2}\lVert A_ix - b_i\rVert_2^2 + \frac{\lambda}{3} LLR(x)\\[0.25cm]
\text{for } i \in \{1, 2, 3\}
\end{array}\end{equation}
Note that the sum of the sub problems, i.e. $f_1 + f_2 + f_3$, equals the
objective of \cref{eq:exp3}.
This can now be solved using Global Consensus ADMM \cite{parikh2014}, listed
as \cref{alg:cadmm}.

\begin{algorithm}
\caption{Global Consensus ADMM \cite{parikh2014}}
\label{alg:cadmm}
\begin{algorithmic}
\STATE{\textit{Inputs:}\begin{itemize}
  \item[-] Sub problems $\{f_1, f_2, \dots, f_N\}$
  \item[-] ADMM step-size $\rho$
  \item[-] $\overline{x} = 0$
  \item[-] $x^i = 0$
  \item[-] $u^i = 0$
\end{itemize}}
\STATE{\textit{Step k:} $(k \geq 0)$ Compute
\begin{subequations}
  \begin{equation}\label{eq:cadmm1}
  x_{k+1}^i = \text{\bf prox}_{\rho f_i} \left(\overline{x}_k - u_k^i\right)
  \end{equation}
  \begin{equation}\label{eq:cadmm2}
  \overline{x}_{k + 1} = \frac{1}{N} \sum_i x_{k + 1}^i
  \end{equation}
  \begin{equation}\label{eq:cadmm3}
  u_{k+1}^i = u_k^i + x_{k+1}^i - \overline{x}_{k + 1}^i
  \end{equation}
\end{subequations}}
\end{algorithmic}
\end{algorithm}

With $\{f_i\}$ as the sub problems, \cref{alg:cadmm} can be used to solve
\cref{eq:exp3}.
\cref{eq:cadmm1} and \cref{eq:cadmm3} can be evaluated independently of each
other in parallel on separate devices, with data movement between CPU and GPU
occurring only during \cref{eq:cadmm2}.
\cref{eq:cadmm1} yields a regularized linear inverse problem similar to
\cref{eq:exp3}, and can thus be solved with FISTA and the polynomial
preconditioner.
When solving this sub problem, $A_i$ is much faster to evaluate compared to the
proximal operator of the LLR regularization, making the proximal evaluation
the bottleneck.
Utilizing polynomial preconditioning allows for similar quality reconstruction
given the same number of $A^*A$ evaluations but with fewer proximal calls,
resulting in faster processing times.

Given the size of the problem, the following experiment only considers FISTA
with and without the proposed preconditioner.
The maximum number of normal evaluations was set to 40.
The ADMM step size was set to $1 \times 10^3$.
For both, the standard FISTA and the polynomial preconditioned FISTA
implementations of \cref{eq:cadmm1}, the respective $\lambda$ and
$\lambda_p$ values were qualitatively tuned for the best reconstruction
performance.
The respective convergence curves when solving sub problem associated with
$f_1$ during the first ADMM iteration is depicted in \Cref{fig:mrf_conv}, and
the final reconstruction after 2 ADMM iterations are depicted in
\Cref{fig:mrf_recons}.
The polynomial preconditioning resulted in an approximately $2\times$ faster
reconstruction.

With respect to real word performance, the synchronization step
\cref{eq:cadmm2} took approximately 26 seconds.
Thus the polynomial preconditioned reconstructions depicted in
\Cref{fig:mrf_recons} where achieved in just over 10 minutes.

\begin{figure}[p]
  \centering
  \includegraphics[width=\textwidth]{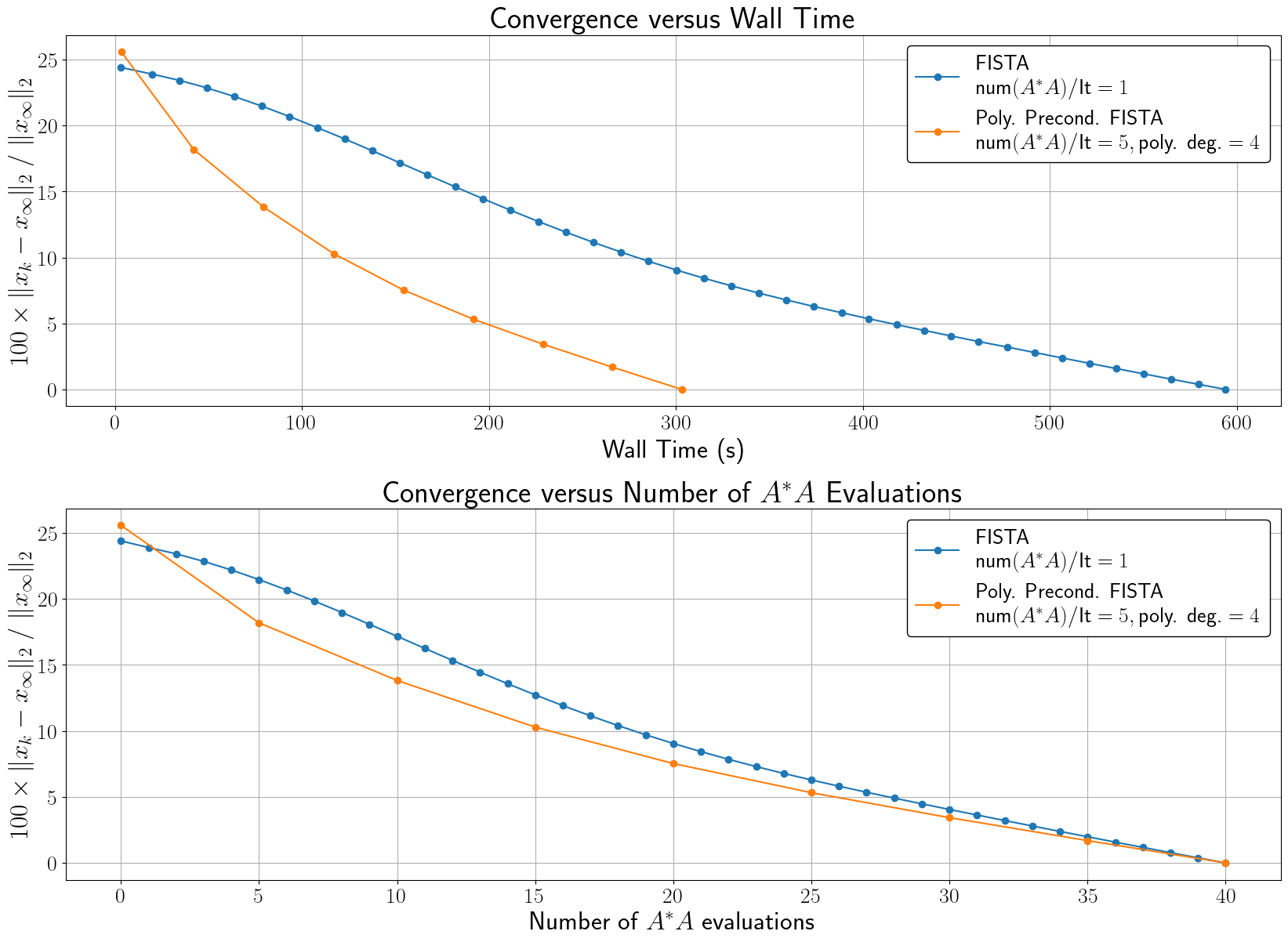}
  \caption{\textbf{Spatio-Temporal MRI Convergence Results.}
    This figure depicts the convergence results of evaluating \cref{eq:cadmm1}
    for the first sub problem $f_1$.
    Given the chosen hyper-parameters, the error over iterations with respect
    to the last iteration of each respective method is plotted.
    The $k^{th}$ iteration and last iteration are labelled as $x_k$ and
    $x_\infty$ respectively.
    The legend on the top right depicts the algorithm and the chosen
    hyper-parameters.
    The x-axis of the top and bottom subplots denotes the total number of
    $A^*A$ evaluations and measured wall-times respectively.
    The circular-markings on each line denote the respective iteration points.
    The number of proximal operators evaluated by a point on the x-axis is
    equal to the number of iterations by that point.
  }
  \label{fig:mrf_conv}
\end{figure}

\section{Discussion}
The polynomial preconditioner is seen to improve the conditioning of the
unconstrained formulation \cref{eq:ucopt}, resulting in faster convergence
compared to standard FISTA, IFISTA and ADMM.
With appropriate tuning of $\lambda_p$ in \cref{eq:pucopt}, the reconstructed
images with the preconditioner are qualitatively similar to the solutions of
\cref{eq:ucopt}.

By utilizing the permutability provided by polynomials and linear operators,
the polynomial preconditioner is applied to the iterates directly instead
of in the range space of $A$, thus significantly reducing the computational
requirements (assuming $n$ is much smaller than $m$, which is often the case
in computational MRI).
Consequently, none of the preconditioned unconstrained reconstructions utilized
any dual variables or an application-specific preconditioner array, resulting
in comparable computational memory requirements to FISTA.
Additionally, it enables the use of the faster $A^*A$ evaluations if
applicable, such as in \Cref{sec:exp2}, which is not possible with
\cite{ong2020}.

While the proposed preconditioner does increase the per-iteration
computational cost compared to FISTA, the proposed method enables faster
computation compared to FISTA even when requiring that both methods to utilize
an equal number of  $A^*A$ evaluations.
This is because, on-top of the theoretically faster convergence offered by
the polynomial preconditioner (as verified by the convergence curve as a
function of the number of $A^*A$ evaluations), utilizing the preconditioner
enables fewer proximal operator evaluations while still achieving comparable
reconstruction.
In particular, for the experiments showcased in this manuscript, the
proximal operator is more computational expensive compared to evaluating
$A^*A$, translating to significantly faster real-world performance as
demonstrated by the convergence curve versus wall-time, while achieving
similar reconstruction performance.
The reduced number of proximal evaluations of the proposed method will be of
even more benefit for highly intensive proximal proximal operators, such as the
structured matrix completion approaches akin to the soft-thresholding versions
of \cite{haldar2013, shin2014}.

The generalizability of the polynomial preconditioner allows it to be used
directly in a subspace-reconstruction \cite{liang2007, petzschner2011,
merry2015, zhao2018, tamir2017, cao2022} without needing to explicitly account
for $\Phi$ when designing the preconditioner while also easily integrating the
structure-specific left-preconditioning ($D$ in \cref{eq:exp3}).
By utilizing $A^*A$ to construct the preconditioner, the method inherently
takes into account information from $\Phi, S, \mathcal{F}$ and $\mathcal{D}$
without user modification.
In particular, as depicted in \Cref{fig:mrf_recons}, the combination of
$\mathcal{D}$, the ADMM parallel processing and the polynomial preconditioner
enabled a high quality reconstruction in just over 10 minutes, which is an
approximately $12\times$ improvement over the computation times reported in
\cite{cao2022}.

Looking at the convergence of iterates as a function of the number of $A^*A$
evaluations, IFISTA performs slightly worse than standard FISTA in
\Cref{sec:exp1} and \Cref{sec:exp2}.
This is arguably because, as discussed in \Cref{sec:complex}, the IFISTA
preconditioner cannot be pre-calculated and is instead evaluated in a
matrix-free manner.
In contrast, by explicitly optimizing for the faster convergence of iterates
for a given degree via \cref{eq:l2cost}, the proposed polynomial preconditioner
enables faster convergence.

This work does not explore incorporating a weighting-prior into the spectral
cost ($w$ in \cref{eq:wl2cost}).
It is expected that a reasonable prior estimate of the spectrum of an
operator $A$ derived for a specific application will significantly
improve the rate of convergence for that application.
However, given the large dimensionality of $A$ and the consequent difficulty in
approximating the spectrum of $A^*A$, estimating a reasonable prior $w$ will
involve trial-and-error, and is thus left to future work.

A limitation of the polynomial preconditioner is with respect to numerical
stability.
In principle, it is possible to utilize a polynomial $p$ of a high degree $d$.
However, in practice, evaluating powers of $A^*A$ can accumulate numerical
errors.
Therefore, tuning the degree $d$ for the application of interest is required.

The error bound presented in \cref{thm:err} is applicable to any linear
operator $P$ such that $P$ is injective on $\text{null}(A)^\perp$.
Thus, deriving a $P$ to minimize the error bound in \cref{eq:errbnd} is a
promising avenue for application-specific preconditioner design.
For example, \cref{eq:errbnd} can be used to upper-bound the error of the
circulant preconditioner in \cite{muckley2016}.


\section*{Acknowledgments}
The authors would like to thank Dr.\ Sophie Schauman for testing the
reproducibility code, Dr.\ Martin Uecker for providing additional
references and Dr.\ Tao Hong for the IFISTA reference.

\begin{figure}[ht]\centering
  \includegraphics[width=0.8\textwidth]{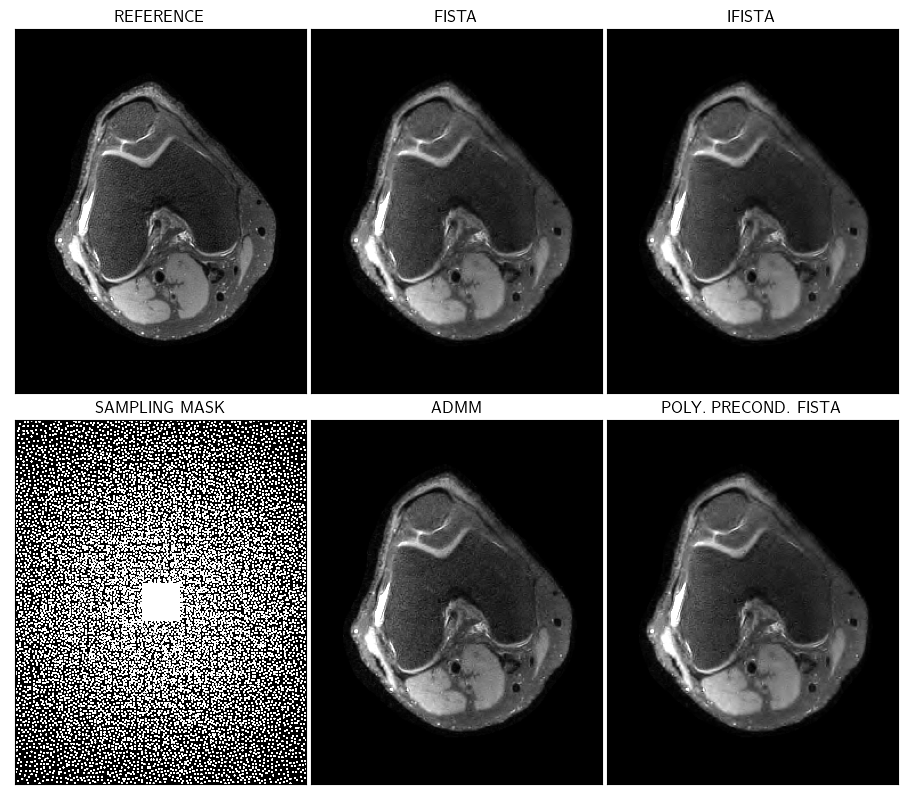}
  \caption{\label{fig:ccs_recons}
    \textbf{Cartesian MRI Reconstruction Results.} This figure depicts the
    final iterations of the respective methods in \cref{fig:ccs_conv}
    The bottom left figure, labelled ``SAMPLING MASK'', denotes the
    under-sampling mask used in \cref{eq:exp1}.
    The hyper-parameters for these results are depicted in
    \cref{fig:ccs_conv}.
  }
  \includegraphics[width=0.8\textwidth]{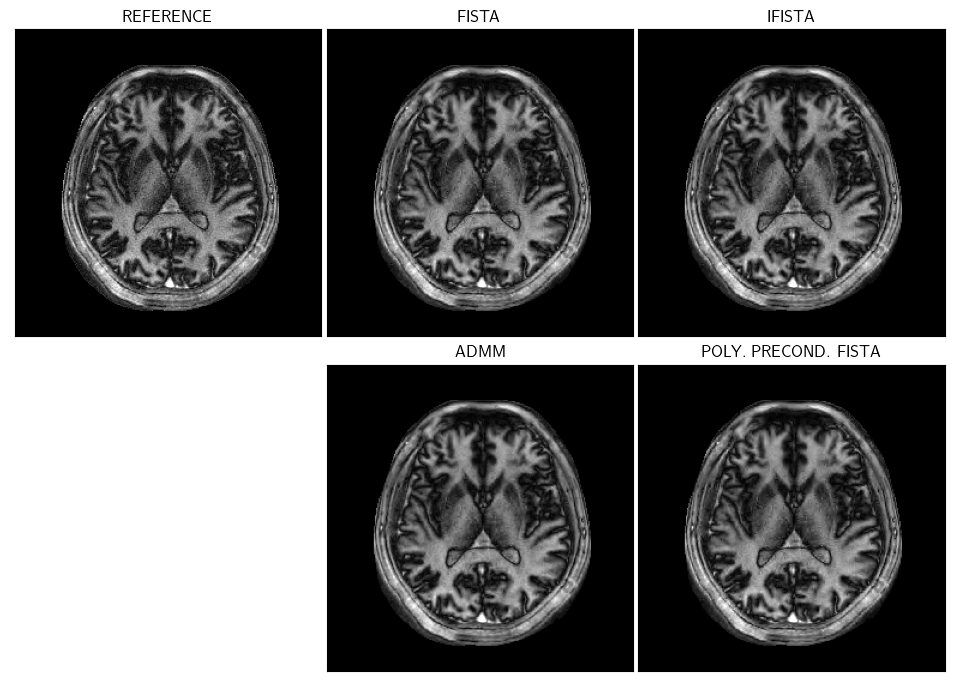}
  \caption{\label{fig:ncr_recons}
    \textbf{Non-Cartesian MRI Reconstruction Results.} This figure depicts the
    final iterations of the respective methods in \cref{fig:ncr_conv}.
    The hyper-parameters for these results are depicted in
    \cref{fig:ncr_conv}.
  }
\end{figure}
\clearpage
\begin{figure}[ht]\centering
  \includegraphics[width=0.8\textwidth]{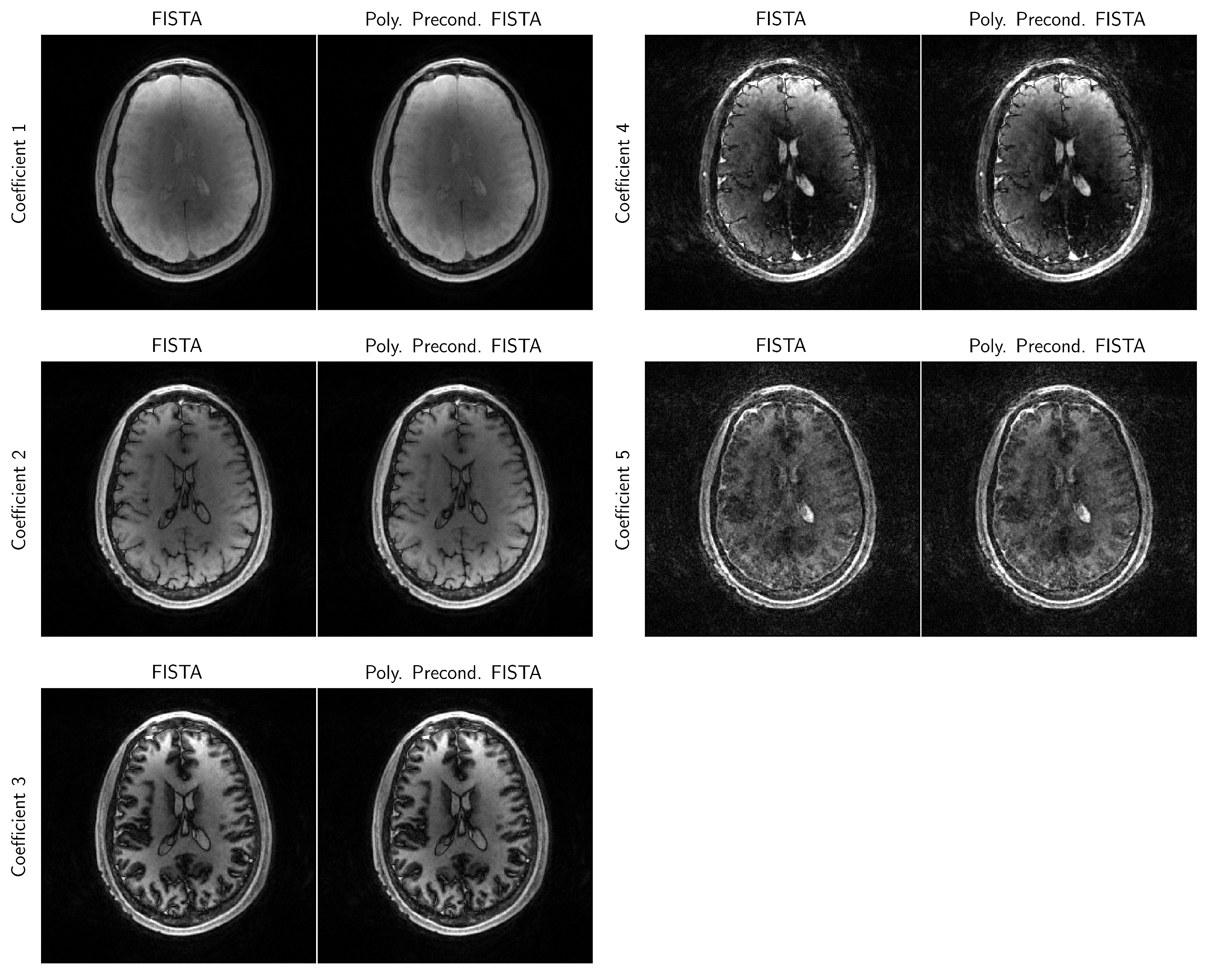}
  \caption{\label{fig:mrf_recons}
    \textbf{Spatio-Temporal MRI Reconstruction Results.} This figure depicts
    the reconstructions after two ADMM iterations, where \cref{eq:cadmm1} is
    evaluated using (left) FISTA and (right) Polynomial Preconditioned FISTA.
    The hyper-parameters for these results are depicted in \cref{fig:mrf_conv}.
  }
\end{figure}
\clearpage
\begin{figure}[ht]\centering
  \includegraphics[width=0.5\textwidth]{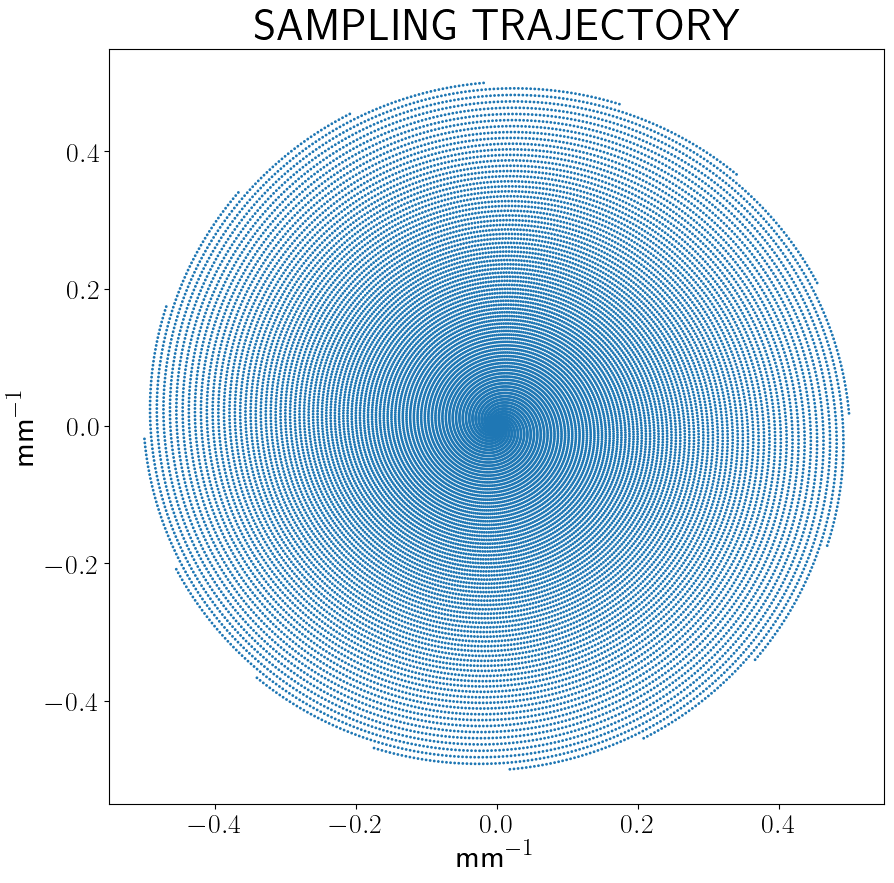}
  \caption{\label{fig:ncr_trj}
    \textbf{Non-Cartesian MRI Trajectory.} This figure depicts the
    under-sampled trajectory used in \cref{eq:exp2}.
  }
\end{figure}

\end{document}